\documentclass[11pt]{article}

\usepackage[utf8]{inputenc}
\usepackage[english]{babel}
\usepackage{microtype}

\usepackage[margin=1in]{geometry}

\usepackage{amsmath,amssymb,amsfonts,amsthm,bm,bbm}
\usepackage{mathtools}

\usepackage{graphicx}
\usepackage{float}
\usepackage{multirow}
\usepackage{subcaption}
\usepackage{algorithm}
\usepackage{algpseudocode}

\usepackage{enumitem}

\usepackage{natbib}
\usepackage[colorlinks=true,allcolors=blue]{hyperref}

\usepackage[font=small,labelfont=bf]{caption}

\theoremstyle{plain}
\newtheorem{theorem}{Theorem}
\newtheorem{lemma}{Lemma}

\theoremstyle{definition}

\theoremstyle{remark}

\newcommand{\iid}{\stackrel{\mbox{\scriptsize iid}}{\sim}}
\newcommand{\ind}{\stackrel{\mbox{\scriptsize ind}}{\sim}}

\usepackage{setspace}

\title{\textbf{Bayesian Mixture Models for Histograms:\\
with Applications to Large Datasets}}

\usepackage{authblk}

\author[1]{Richard L. Warr\thanks{warr@stat.byu.edu}}
\author[2]{Fernando A. Quintana}
\author[3]{Alessandra Guglielmi}
\author[4]{Mario Beraha}

\date{}

\affil[1]{Department of Statistics, Brigham Young University}
\affil[2]{Department of Statistics, Pontificia Universidad Católica de Chile}
\affil[3]{Department of Mathematics, Politecnico di Milano}
\affil[4]{Department of Economics, Management and Statistics, University of Milano-Bicocca}

\begin{document}

\maketitle

\vspace{-3em}

\begin{abstract}
In many real-world scenarios, especially those involving privacy constraints or data summarization, data are available only in aggregated forms, such as histograms or frequency tables. This work introduces a novel Bayesian method for inferring the underlying population distribution by fitting a mixture model to binned data. While we focus on mixtures of normal distributions, the framework is flexible and can be extended to other distributional families. We place a prior distribution on the number of mixture components, accommodating both finite and countably infinite mixtures, and perform inference using reversible jump MCMC. The proposed approach demonstrates strong performance on large-scale data, showcasing the potential of nonparametric Bayesian modeling in practical applications. Furthermore, we extend the method to model multiple histograms simultaneously and cluster them using the Dirichlet process. This enables information sharing across populations and provides a principled posterior probability to assess homogeneity between groups. Some theoretical results supporting the performance of our proposed methodology are also discussed.
\end{abstract}

\noindent\textbf{MSC 2020:} 62F15, 62G07, 60G57, 62F12, 62H30, 65C05

\noindent\textbf{Keywords:} binned data, clustering, density estimation, Dirichlet process, posterior consistency

\setstretch{1.2}

\bibliographystyle{apalike}

\section{Introduction}

In an era where data are often collected and shared in aggregated forms, such as histograms or frequency tables, there is a need for statistical methods that can accurately recover underlying population distributions from such summaries. This scenario often arises due to privacy constraints, storage limitations, or practical data collection processes. Traditional Bayesian modeling approaches often assume access to raw data, making them unsuitable or inefficient in these contexts.

To address this challenge, we propose a Bayesian mixture modeling approach that directly models histograms or binned data. Our model is a mixture distribution which estimates the latent probability density function of the underlying population. Although we focus primarily on a mixture of Gaussian distributions, the proposed framework is quite general and can be extended to other families of distributions. A prior is placed on the number of mixture components, allowing for both finite and countably infinite mixtures, and posterior samples are obtained using reversible jump Markov chain Monte Carlo (RJMCMC) methods. Furthermore, we extend our basic mixture model to handle multiple histograms by clustering them using a Dirichlet process (DP), which enables information sharing across groups and yields posterior probabilities for assessing homogeneity between populations.

While Bayesian nonparametric methods such as Dirichlet process mixture (DPM) models have shown strong theoretical appeal and modeling flexibility, their computational cost becomes prohibitive as the dataset grows. In particular, posterior sampling from DPM models scales poorly with the number of observations, making them impractical for modern large-scale data problems. A substantial body of research has explored alternatives and approximations to mitigate this issue, but many existing solutions involve complex or domain-specific trade-offs.

In contrast, the method we present is conceptually simple and computationally efficient, with inference that scales approximately linearly with the number of bins in the histogram or frequency table. Through simulation studies and theoretical developments, we demonstrate that our approach not only offers substantial speed improvements but also retains consistency with the true underlying distribution. These properties make our proposed model a promising and scalable solution for mixture modeling in the context of large or summarized datasets.

\subsection{Literature Review}

The need for statistical models for histograms or binned data has grown due to increasing concerns around privacy and large-scale data summarization. Traditional approaches, which rely on raw, individual-level data, are often unavailable or unsuitable in many contexts. We outline a few key works which led us to the development and/or facilitated the methodology discussed in this article. 

From a frequentist perspective, \cite{wasserman2006NONpar} demonstrated that classical histogram estimators can converge to the true density under mild smoothness conditions. Theoretical results are derived by introducing a histogram-based likelihood and proving convergence in the limit of growing sample sizes and bin counts. Alternatively, \cite{billard2017hierarchical} introduced hierarchical clustering techniques tailored for histogram-valued data, laying the foundation for symbolic data analysis in settings where only aggregated information is available. Furthermore, \cite{SameAmbroiseGovaert2006} consider a classification approach based on binned data, employing the EM algorithm. Similarly, \cite{Shirazi2023_EM_MCEM} use EM and Monte Carlo EM to estimate the underlying mean and variance structure from grouped (binned) data.

From a Bayesian point of view, 
\cite{LambertEilers2009} estimate smooth densities via penalized splines in the logarithmic scale, with a roughness penalty approach.
\cite{AlstonMengersen2010} consider density estimation for a finite Gaussian mixture model to estimate the underlying distribution from which binned data are collected. Parameter estimation under a similar framework is considered in \cite{GauTapsobaLee2014}.
More recently, and more closely connected to our proposal, \cite{martinez2024model} proposed a model-based approach for clustering binned data, demonstrating the advantages of probabilistic modeling frameworks for such summaries.  Additionally, \cite{simensen2026random} propose a Bayesian approach for constructing irregular histograms, in which both the number and locations of bins are learned from data. This complements our framework, which accommodates potentially unequal binning intervals and focuses on inference given aggregated observations.

The Reversible Jump MCMC algorithm introduced by \cite{richardson1997bayesian} allows users to estimate the number of components for mixture
models with a prior on the number of components. This
work explicitly allows implementation of posterior simulation algorithms for cases where
the proposed moves imply changes in the dimensions of the
underlying parameter vectors. 
Similarly, \cite{neal2000markov} outlined efficient MCMC methods for Dirichlet process mixture (DPM) models, which are widely used in nonparametric Bayesian inference. These computational approaches using one-at-a-time updates, while powerful, do not scale well as the number of observations becomes large. 

Not only might one be interested in estimating a distribution given by a single population, but clustering distributions of multiple populations could be beneficial.  Some previous works have approached this problem.  For example, the nested Dirichlet process \citep{rodriguez2008nested} allows for shared mixture distributions across groups.  The hierarchical Dirichlet process (HDP) framework introduced by \cite{teh2004sharing}, allows for shared mixture components across groups. Similar constructions are those in \cite{denti2023common} and \cite{dangelo2026BA}, based on the assumption that all groups share a common set of atoms and differ through group-specific weights.
A related yet more parsimonious approach is the semi-hierarchical Dirichlet process developed by \cite{beraha2021semi}, which provides computational and modeling advantages when the groups are expected to be partially but not fully exchangeable. 
Other works include \cite{Argiento2020}, \cite{lijoi2023flexible}, \cite{Duan2025} and references therein. A common aspect in these works is the construction of a model for several related distributions by relating atoms and/or weights related to individual observations and/or groups. In contrast, our proposal for a single histogram employs no atoms, while the extension to multiple histograms treats entire distributions as atoms. Again, this is a feature of the data, where only binned counts are available. 

Some generalizations of the notion of histograms have been explored in the literature. \cite{canale2011bayesian} developed mixtures of rounded continuous kernel distributions for discrete count data. More recently, \cite{goh2024sparse} developed sparse tree-based and list-based 
models for density estimation in the context of 
binary/categorical data.

\subsection{Our Contributions}

As our first contribution, we present a framework for analyzing data from a single histogram when the original observations are missing or masked for data privacy, and only binned frequencies are available. The goal is the estimation of the true underlying population distribution generating the original observations. As noted earlier, the great majority of nonparametric approaches rely on access to the original data, and thus, cannot be employed in
our context. A second contribution is an efficient posterior simulation scheme tailored to our first model that scales well with the number of bins, but that is not affected by the total sample size (i.e. the sum of bin frequencies). This is a genuinely
distinctive feature of our approach, compared to the usual
behavior reported for the computational aspects of traditional Bayesian nonparametric models. Our third contribution is the
extension of the model to the case of multiple samples
of binned data, where the parameters specifying each population distribution are assumed to be a sample from a Dirichlet process.
A fourth
contribution is a suitable posterior simulation scheme for
the multi-sample case, that employs some specifically
designed moves to facilitate the parameter space exploration.

We began with a discussion of previous work and the challenges and motivations for modeling histogram data.  The remainder of the paper is organized as follows. We introduce our modeling framework in Section~\ref{sec:model}, starting with the single-histogram case before extending it to accommodate multiple histograms, similar in concept to the nested DP and the hierarchical DP. Some theoretical properties are also developed. Section~\ref{sec:postsamp} delves into the computational strategies used for inference and Section~\ref{sec:simulations} presents some simulation studies, emphasizing the scalability and efficiency of our methods. To demonstrate the practical value of the approach, we present two applications in Section~\ref{sec:applications}, one using U.S. flight time data and the other U.S. COVID-19 mortality data. We conclude in Section~\ref{sec:disc} by discussing the implications of our work and some thoughts for future research.

\section{Models}
\label{sec:model}

In this section, we describe the two models that are fundamental to this article. First, we introduce a model for estimating the population distribution from a single histogram, which may be regarded as a sufficient statistic for the data. Next, given multiple histograms, we propose a hierarchical model to estimate the population distribution for each histogram. This model allows for the possibility that two or more histograms were generated from a common distribution. In this way, we equivalently address the problem of clustering the underlying distributions which have generated the data binned into different histograms, or of assessing homogeneity between groups.
We conclude the section with additional details relevant to the models.

\subsection{Model for a Single Histogram}
\label{sec:singleHist}

Suppose there exists a vector $\bm{y} = (y_{1}, \ldots, y_{n})$ containing observations such that
$y_j \,|\, G \iid G$ for $j = 1, \ldots, n$, where $G$ is an unknown cumulative
distribution function (CDF) that we wish to estimate. However, in many cases the individual values in
$\bm{y}$ are unobserved, and only bin counts are available.

Define a histogram $\mathcal{H} = (\bm{\tau}, \bm{v})$, where $\bm{\tau}$ is a vector of bin boundaries
that partitions $\mathbb{R}$, with $\tau_{m-1} < \tau_m$ for all $m = 1, \ldots, M$.
For convenience, let $\tau_0 \equiv -\infty$ and $\tau_M \equiv \infty$.
Using $\mathbf{I}(\cdot)$ as the indicator function, define
\[
v_m = \sum_{j=1}^n \mathbf{I}\bigl(\tau_{m-1} < y_j \le \tau_m\bigr),
\qquad \text{ for } m = 1, \ldots, M,
\]
where $v_m$ represents the number of observations contained in the $m$th bin.
Without loss of generality, we construct the bins such that $v_1 = 0$ and $v_M = 0$,
and denote the vector of bin counts as
\[
\bm{v} = (0, v_2, v_3, \ldots, v_{M-1}, 0).
\]
Thus, $\mathcal{H}$ is fully specified by $\bm{\tau}$ and $\bm{v}$.
In this work, given $\bm{y}$,  we treat 
 $\bm{\tau}$ as fixed and known, with
$\tau_1 < \min\{\bm{y}\}$ and $\tau_{M-1} > \max\{\bm{y}\}$.

Some additional notation is required before defining the model.
Let $K$ be a random positive integer representing the number of components in the mixture model we introduce below.
Let $\bm{w}$ denote a probability vector of length $K$, satisfying the constraints
$w_k \ge 0$ and $\sum_{k=1}^K w_k = 1$.
Similarly, let $\bm{p}$ be a probability vector of length $M$, with
$p_m \ge 0$ and $\sum_{m=1}^M p_m = 1$.
Finally, let $\Phi(\,\cdot \mid \mu, \sigma^2)$ denote the CDF
of the Normal distribution with mean $\mu$ and variance $\sigma^2$.

Our proposed model for histograms is:
\begin{align}
    \label{eq:singleHistModel}
    \begin{split}
        \bm{v} \,|\, \bm{p} &\sim \text{Multinomial}\left(n, \bm{p}\right) \\
        p_m &= \tilde{G}(\tau_m)-\tilde{G}(\tau_{m-1}) \text{ for } m=1,\ldots,M\\
        \tilde{G}(\cdot) &= \sum_{k=1}^{K} w_k \Phi\left( \, \cdot \mid \mu_k,\sigma^2_k \right)\\
        K &\sim \text{Discrete distribution on } \mathbb{N}^+\\
        \bm{w} \,|\, K &\sim  \text{Dirichlet}\left(\gamma, \ldots, \gamma \right)\\
       \mu_1,\ldots,\mu_K \,|\, K &\sim \text{Continuous distribution on } \mathbb{R}^K,\\
        & \text{ and such that } \mu_1 < \mu_2 < \ldots < \mu_K \\
        \sigma^2_k \,|\, K &\iid \text{Continuous distribution on } \mathbb{R}^+\!\!\!, \text{ for } k=1,\ldots,K.\\
    \end{split}
\end{align}
We include a few clarifying remarks about the model.
First, $\bm{\tau}$ is fixed and known; in general, no constraints on equal bin widths are required. Second, the mixture distribution $\tilde{G}$ is defined using Gaussian kernels, although this assumption
could be relaxed if desired. We note that $\tilde{G}$ is how the true underlying $G$ is modeled, and that $\tilde{G}$ is deterministic given $K$, 
$\bm{w}$, $\bm{\mu}$, and $\bm{\sigma}^2$ .
Finally, specific priors for $K$, 
$\{\mu_k\} \,|\, K$, and $\{\sigma_k^2 \}\,|\, K$
are not given here; however, when needed, these distributions are defined in the simulation studies and applications.

This model is highly flexible and, depending on the prior placed on $K$, may be viewed as a
mixture of mixtures \citep[MFM;][]{miller2018mixture}. It is worth mentioning that our MFM type approach can be easily replaced by alternative ways to construct the weights, such as stick-breaking priors~\citep{Ishwaran2001}, or any other suitable nonparametric mixture. Our choice here is motivated by the good properties of MFMs and by the ease of implementation. Later in the paper, we show that under suitable
regularity conditions, the posterior of $\tilde{G}$ converges to $G$ (i.e., is consistent in $n$ and $M$) for a broad
class of distributions, $G$.

\subsection{Posterior consistency of the binned density estimator}
\label{sec:consistency}

We now study the consistency of a ``binned'' density estimator obtained from the posterior of model \eqref{eq:singleHistModel}. Let \(M = M_n\) be the number of bins; for \(m = 1,\ldots,M\), let \(\ell_m = \tau_m - \tau_{m - 1}\) the length of the $m$-th bin. The binned density is
\[
    g_b(x) = \sum_{m = 1}^M \frac{p_m}{\ell_m}\mathbf{1}(\tau_{m - 1} < x \le \tau_m).
\]

Assume now that the unobserved data \(Y_1,\ldots,Y_n\) are i.i.d. from a true density \(g^*\). 
Without loss of generality, we assume that \(g^*\) is supported on \([0,1]\). The Hellinger distance between two densities \(g_1\) and \(g_2\) on \([0,1]\) is denoted by
\[
    h^2(g_1, g_2) = \frac{1}{2}\int_0^1 \left(\sqrt{g_1(x)} - \sqrt{g_2(x)}\right)^2 dx.
\]
The next theorem establishes a quantitative consistency result for the posterior of $g_b$ at $g^*$.
To this end, we assume the following conditions hold:
\begin{enumerate}
\item The true density \(g^*\) is \(\beta\)-Hölder continuous on \([0,1]\), for some \(\beta > 0\), and there exist constants \(0 < c_- < c_+ < \infty\) such that \(c_- \le g^*(x) \le c_+\) for every \(x \in [0,1]\).

\item The binning refines with \(n\) and is uniform, that is, \(\tau_m = m / M_n\) for \(m = 0,\ldots,M_n\), and
\[
M_n \asymp \left(\frac{n}{\log n}\right)^{1 / (2s + 1)}.
\]

\item In model \eqref{eq:singleHistModel}, the prior on \(K\) has unbounded support and there exists \(C_K < \infty\) such that
\[
\Pi(K = k) \ge \exp(-C_K k \log k)
\]
for all sufficiently large \(k\). Conditional on \(K\), the ordered means are the order statistics of \(K\) i.i.d. draws from a density \(g_\mu\) such that \(g_\mu\) is bounded away from zero on an open interval containing \([0,1]\). Finally, if \(\sigma_k\) denotes the component standard deviation, the prior for \(\sigma_k\) satisfies, for some constants \(c_\sigma > 0\), \(r_\sigma > 0\), and \(\sigma_0 > 0\),
\[
\Pi_\sigma([t,2t]) \ge c_\sigma t^{r_\sigma}
\]
for every \(0 < t < \sigma_0\).
\end{enumerate}

\begin{theorem}
\label{thm:mfm-binned-hellinger}
Under the assumptions above, there exists \(A < \infty\) such that
\[
\Pi\left(h(g_b, g^*) > A\left(\frac{\log n}{n}\right)^{s / (2s + 1)} \mid \bm v\right) \to 0
\]
in \(P_{g^*}\)-probability.
\end{theorem}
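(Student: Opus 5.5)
We take $s=\beta\le 1$ throughout; the statement writes $s$, and we read it as the Hölder exponent $\beta$ of Assumption~1. For $\beta>1$ the histogram bias below saturates at order $M_n^{-1}$. The proof reduces the problem to posterior contraction for a multinomial (categorical) parameter, followed by a bias bound.

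\textbf{Reduction.} Let $p^*_m=\int_{\tau_{m-1}}^{\tau_m} g^*$ and let $g^*_b$ be the binned version of $g^*$. Because both $g_b$ and $g^*_b$ are constant on each bin of length $\ell_m$, a direct computation gives
\[
h^2(g_b,g^*_b)=\tfrac12\sum_m\bigl(\sqrt{p_m}-\sqrt{p^*_m}\bigr)^2=:h^2(\bm p,\bm p^*).
\]
The triangle inequality then gives $h(g_b,g^*)\le h(\bm p,\bm p^*)+h(g^*_b,g^*)$.

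For the second term, on each bin $g^*_b$ equals the average of $g^*$, so $|g^*_b-g^*|\lesssim M_n^{-\beta}$ by Hölder continuity. Since $g^*\ge c_-$, we have $(\sqrt a-\sqrt b)^2\le (a-b)^2/(4c_-)$, and hence $h(g^*_b,g^*)\lesssim M_n^{-\beta}\asymp(\log n/n)^{\beta/(2\beta+1)}$.

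It therefore suffices to show that the posterior of $\bm p$ contracts in $h$ at rate $\varepsilon_n$ with $\varepsilon_n^2\asymp M_n\log n/n$. This is exactly the target rate, because $M_n\log n/n\asymp(\log n/n)^{2\beta/(2\beta+1)}$.

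\textbf{Contraction via the general theorem.} The data $\bm v$ are equivalent to $n$ i.i.d.\ categorical draws on $\{1,\dots,M_n\}$ with parameter $\bm p$, and the posterior depends on the prior only through the induced prior on $\bm p$. We apply the Ghosal--Ghosh--van der Vaart theorem with sieve equal to the whole simplex $\Delta_{M_n}$.

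The entropy condition holds because $\log N(\varepsilon_n,\Delta_{M_n},h)\lesssim M_n\log(1/\varepsilon_n)\lesssim n\varepsilon_n^2$. The required tests exist by the standard Hellinger testing construction for i.i.d.\ models.

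What remains is the prior-mass condition
\[
\Pi\bigl(\mathrm{KL}(\bm p^*,\bm p)\le\varepsilon_n^2,\ V(\bm p^*,\bm p)\le\varepsilon_n^2\bigr)\ge e^{-Cn\varepsilon_n^2}.
\]
Since $p^*_m\ge c_-/M_n$, both KL and $V$ are bounded by a multiple of $\max_m|p_m/p^*_m-1|^2$. It thus suffices to give prior mass $e^{-CM_n\log n}$ to the event $\max_m|p_m/p^*_m-1|\le\varepsilon_n$.

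\textbf{Prior mass construction (main obstacle).} Take $K=M_n$, with centers $c_m$ the bin midpoints and target weights $w^\circ_m=p^*_m$. Consider the set of parameters satisfying:
\begin{itemize}[nosep]
\item $\mu_m\in c_m\pm\delta/M_n$ for a small constant $\delta$;
\item $\sigma_k\in[t,2t]$ with $t=a/(M_n\sqrt{\log n})$, for a small constant $a$;
\item $\bm w$ within $\ell_1$-distance $\varepsilon_n^2/M_n$ of $\bm w^\circ$.
\end{itemize}

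Component $m$ then places mass $1-O(n^{-L})$ inside bin $m$, for any fixed $L$ once $a$ is chosen small enough. The leakage into other bins, including the outer cells that carry the mass outside $[0,1]$, is of order $n^{-L}\ll\varepsilon_n/M_n\lesssim\varepsilon_n p^*_m$. Consequently $p_m/p^*_m=1+O(\varepsilon_n)$ uniformly in $m$, where we use $p^*_m\gtrsim 1/M_n$.

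The prior mass of this set is the product of four factors:
\begin{itemize}[nosep]
\item $\Pi(K=M_n)\ge e^{-C_KM_n\log M_n}$;
\item the ordered-means factor, at least $M_n!\,(c\,\delta/M_n)^{M_n}\ge e^{-CM_n}$, using that $g_\mu$ is bounded below near $[0,1]$;
\item the variance factor, $(c_\sigma t^{r_\sigma})^{M_n}=e^{-O(M_n\log n)}$;
\item the Dirichlet factor, at least $e^{-CM_n\log(M_n/\varepsilon_n)}$ by the standard small-ball bound, valid for fixed $\gamma$.
\end{itemize}
Their product is $e^{-C'M_n\log n}=e^{-C'n\varepsilon_n^2}$, which is the required lower bound.

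The delicate part is controlling the Gaussian leakage uniformly in \emph{relative} terms, particularly at the boundary bins, while keeping $\log(1/t)=O(\log n)$. This explains the choice of $t$ of order $1/(M_n\sqrt{\log n})$ rather than $1/M_n$.

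Combining the contraction of $\bm p$ with the bias bound yields the theorem with a suitable constant $A$.
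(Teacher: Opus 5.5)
Your proposal is correct and follows the paper's proof essentially step for step. The shared steps are: the Hellinger identity that reduces $h(g_b,g^*_b)$ to the Hellinger distance between bin-probability vectors; the H\"older bias bound $h(g^*_b,g^*)\lesssim M_n^{-s}$; a general contraction theorem with entropy and tests on the $M_n$-simplex; and the same prior-mass construction ($K=M_n$, means near bin midpoints, $\sigma_k\asymp 1/(M_n\sqrt{\log n})$ so the Gaussian leakage is $O(n^{-L})$, weights near $p^{*,n}$). The differences are cosmetic: you cite the i.i.d.\ Ghosal--Ghosh--van der Vaart theorem on categorical draws where the paper cites the non-i.i.d.\ Ghosal--van der Vaart (2007) theorem for the triangular array (your version works because the argument runs at each fixed $n$), and counting the $M_n!$ orderings in the means factor gives you the slightly sharper bound $e^{-CM_n}$.
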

The proof of Theorem \ref{thm:mfm-binned-hellinger} (see the Appendix) is based on an application of the general theory of posterior convergence for non-i.i.d. data from \cite{GvV2007noniid}. This is needed because the sequence of multinomial experiments induced by the refining partitions forms a triangular array.

The theorem is stated for the binned density \(g_b\), rather than for the random mixture density underlying \(\widetilde G\). This distinction is important because the likelihood depends on the density only through the bin probabilities \(p_1,\ldots,p_M\). Therefore, the data cannot distinguish between two densities with the same bin masses but different behavior inside the bins. Without additional assumptions preventing oscillations below the bin width, Hellinger contraction of the random density itself need not follow from binned data.

\subsection{Model for Multiple Histograms}
The ideas underlying our proposed hierarchical model are, in some respects, comparable to the
nested DP described in \cite{rodriguez2008nested} and the hierarchical DP in \cite{teh2006hierarchical}. However, in our model,
the underlying items being clustered are discrete counts rather than individual observations.

Suppose we have $N$ histograms indexed by $i = 1, \ldots, N$.
Let $\bm{y}_i = (y_{i,1}, \ldots, y_{i,n_i})$ denote the data vector underlying the $i$th
histogram, where $n_i$ is the length of that vector. We assume that
\[
y_{i,j} \mid G_i \stackrel{\text{iid}}{\sim} G_i, \qquad \text{ for } j = 1, \ldots, n_i,
\]
where $G_i$ is the CDF of an unknown continuous distribution that we want to estimate.

Let $M_i$ denote the number of bins in the $i$th histogram. Define
$\bm{\tau}_i = (\tau_{i,0}, \ldots, \tau_{i,M_i})$ to be the corresponding bin boundaries, where
$\tau_{i,m-1} < \tau_{i,m}$ for all $m = 1, \ldots, M_i$.
For convenience, let $\tau_{i,0} \equiv -\infty$ and $\tau_{i,M_i} \equiv \infty$.
Additionally, define
\[
v_{i,m} = \sum_{j=1}^{n_i} \mathbf{I}\bigl(\tau_{i,m-1} < y_{i,j} \le \tau_{i,m}\bigr),
\qquad \text{ for }m = 1, \ldots, M_i,
\]
so that $v_{i,m}$ is the number of observations contained in the $m$th bin of the $i$th histogram.
Without loss of generality, we construct the bins such that $v_{i,1} = 0$ and $v_{i,M_i} = 0$,
and write
\[
\bm{v}_i := (0, v_{i,2}, v_{i,3}, \ldots, v_{i,M_i-1}, 0).
\]
We denote the $i$th histogram by $\mathcal{H}_i$, which is fully specified by $\bm{v}_i$ and
$\bm{\tau}_i$. Again, we treat $\bm{\tau}_i \mid \bm{y}_i$ as fixed and known.

For notation, let $K_i$ be a random positive integer representing the number of components in the
$i$th mixture model. Let $\bm{w}_i$ be a probability vector of length $K_i$ with elements
$w_{i,k}$ for $k = 1, \ldots, K_i$, satisfying the constraints $w_{i,k} \ge 0$ and
$\sum_{k=1}^{K_i} w_{i,k} = 1$. Similarly, let $\bm{p}_i$ be a probability vector of length $M_i$
with elements $p_{i,m}$ for $m = 1, \ldots, M_i$, such that $p_{i,m} \ge 0$ and
$\sum_{m=1}^{M_i} p_{i,m} = 1$.

The model, in hierarchical form, is:
\begin{align}
    \begin{split}
        \bm{v}_i \,|\, \bm{p}_i \ind \ &\text{Multinomial}\left(n_i, \bm{p}_i\right) \text{ for } i =1,2,\ldots,N\\
        p_{i,m} = \ & \tilde{G}_i(\tau_{i,m})-\tilde{G}_i(\tau_{i,m-1}) \text{ for } m
        =1,\ldots,M_i\\
        \tilde{G}_i(\cdot) = \ &\sum_{k=1}^{K_i} w_{i,k} \Phi\left( \, \cdot \,|\, \mu_{i,k},\sigma^2_{i,k} \right)\\
       \left(K_i, \bm{w}_i,\bm{\mu}_i,\bm{\sigma}^2_i\right) \,|\, F \iid \ &F\\
       F \,|\, \alpha, \bm{\eta} \sim \ &DP(\alpha, Q_{\bm{\eta}})\\
       \alpha \sim  \ & \text{Continuous Distribution on } \mathbb{R}^+\\
       Q_{\bm{\eta}} = \ & p(K^{\star}) \, p(\bm{w}^{\star} \,|\, K^{\star}) \, p(\bm{\mu}^{\star} \,|\, K^{\star}) \, p(\bm{\sigma}^{2 \star} \,|\, K^{\star}) \\
        K^{\star} \sim \ & \text{Discrete distribution on } \mathbb{N}^+\\
        \bm{w}^{\star} \,|\, K^{\star} \sim  \ & \text{Dirichlet}\left(1/K^{\star},\ldots,1/K^{\star} \right)\\
       \mu_1^{\star},\ldots, \mu_{K^{\star}}^{\star} \,|\, K^{\star} \sim \ & \text{Continuous distribution on } \mathbb{R}^{K^{\star}},\\
                & \text{ and such that } \mu_1^{\star}< \mu_2^{\star} < \ldots < \mu_{K^{\star}}^{\star}\\
        \sigma^{2 \star}_k \,|\, K^{\star} \iid \ & \text{Continuous distribution on } \mathbb{R}^+\!\!\!, \text{ for } k=1,\ldots,K^{\star},\\
    \end{split}
    \label{eq:multipleHists}
\end{align}
where $\bm{\eta}$ is a vector of parameters containing $K^{\star}$, $\bm{w}^{\star}$, $\bm{\mu}^{\star}$, and $\sigma^{2 \star}_k$. The baseline distribution $Q_{\bm{\eta}}$, adopted when defining model \eqref{eq:multipleHists}, has the MFM form and extends the model for a single histogram. In other words, data from each individual histogram arises marginally from a distribution that behaves exactly as in \eqref{eq:singleHistModel}.

This model, through the Dirichlet process, allows two (or more) distinct histograms, $\mathcal{H}_i$ and $\mathcal{H}_j$, to share the same underlying distribution, such that $\tilde{G}_i = \tilde{G}_j$.  We demonstrate the capabilities of this model through
simulation studies and an application.

\subsection{Centering and Scaling Histograms}

As a practical consideration, we conclude that the model is more manageable and that posterior
sampling exhibits better mixing when the data are centered and scaled. The models and
accompanying theory remain unchanged regardless of whether centering and scaling are applied.
The primary difference is that centering and scaling facilitate more standardized priors and
proposal distributions, leading to improved mixing in the MCMC implementation.

When individual observations are available, centering and scaling are straightforward.
However, additional details must be specified when working with histogram data. We do not
claim that the approach described here is optimal in any sense, but we find it intuitive and
easily reversible after model fitting. Our centering and scaling procedure for a given
histogram is as follows:
\begin{enumerate}[noitemsep]
    \item Compute the midpoint of each bin with positive counts.
    \item Convert each count into an equivalent number of pseudo-observations, each taking the
    value of the corresponding bin midpoint.
    \item Compute the mean $\mu^{\dagger}$ and standard deviation $\sigma^{\dagger}$ of these
    pseudo-observations.
    \item Construct new bin boundaries $\tau_m^{\dagger} = (\tau_m - \mu^{\dagger})/\sigma^{\dagger}$ for
    $m = 0, 1, \ldots, M$. The bin counts $\bm{v}$ remain unchanged.
\end{enumerate}
To reverse the centering and scaling, one simply evaluates
$\tilde{G}\bigl((x - \mu^{\dagger})/\sigma^{\dagger}\bigr)$ from the model in~\eqref{eq:singleHistModel}.

When comparing multiple histograms using the model in~\eqref{eq:multipleHists}, a practical approach is to aggregate all pseudo-observations
and compute common centering and scaling parameters, $\mu^{\dagger}$ and $\sigma^{\dagger}$, across all
histograms.

\section{Posterior Sampling}\label{sec:postsamp}

In this section we detail the Markov chain Monte Carlo implementation for the models specified in (\ref{eq:singleHistModel}) and (\ref{eq:multipleHists}).

\subsection{Posterior Sampling for a Single Histogram}
One challenge in implementing this model is that allowing the number of components $K$ to vary changes the dimensionality of the parameter vectors $\bm{w}$, $\bm{\mu}$, and $\bm{\sigma}^2$.  To account for this, we employ the Reversible Jump MCMC algorithm described in \cite{richardson1997bayesian} (R\&G) for finite Gaussian mixture models.  With minor modifications to account for the absence of individual-level observations, we closely follow R\&G's approach.  The resulting MCMC sampling scheme is outlined in Algorithm~\ref{alg:histsim}. Each iteration consists of updating the parameters $\bm{w}$, $\bm{\mu}$, and $\bm{\sigma}^2$, followed by proposals which influence $K$: to split or merge mixture components followed by a birth or a death of a mixture component.

Note that the normal proposal distributions in Algorithm~\ref{alg:histsim} are parameterized using a precision matrix, we follow that same convention through the remainder of the article.

\begin{algorithm}[t]
\caption{MCMC Sampling for a Single Histogram}
\label{alg:histsim}
\begin{algorithmic}[1]
\Require Histogram bin counts $\bm{v}$, bin boundaries $\bm{\tau}$, number of posterior samples $T$, tuning parameters ($\gamma_w$, $\gamma_\mu$, $\gamma_\sigma$), and a pmf on $K$.
\State \textbf{Initialization:} Select initial values for the parameters $K$, $\bm{w}$, $\bm{\mu}$, and $\bm{\sigma}^2$ such that $p(K,\bm{w},\bm{\mu},\bm{\sigma}^2\,|\,\bm{v},\bm{\tau}) > 0$.

\For{$t=1$ to $T$}

\State \textbf{Update $\bm{w}$:} Metropolis-Hastings (M-H) proposal \mbox{$\bm{w}^{*} \sim $ Dirichlet$(\gamma_w \, \bm{w}^{(t-1)})$}.

\State \textbf{Update $\bm{\mu}$:} This is a M-H step using proposal $\bm{\mu}^{*} \sim N(\bm{\mu}^{(t-1)}, \gamma_\mu \,\bm{I}_k)$.

\State \textbf{Update $\bm{\sigma}^2$:} This is a M-H step using proposal \mbox{$1/\bm{\sigma}^{2*} \sim N(1/\bm{\sigma}^{2(t-1)}, \gamma_\sigma \,\bm{I}_k)$}.

\State \parbox[t]{\dimexpr\linewidth-\algorithmicindent}{\textbf{Propose a Split or a Merge:} Proposals are the same as in R\&G, with acceptance probabilities defined in (11) of R\&G with one caveat: $P_{alloc}\equiv 1$, since histogram bins are not assigned to specific components.}

\State \parbox[t]{\dimexpr\linewidth-\algorithmicindent}{\textbf{Propose a Birth or a Death:} Proposals are the same as in R\&G, with acceptance probabilities defined in (12) of R\&G with two caveats: we consider all components to be empty and include the likelihoods (proposed/current) into the ratio.}

\EndFor
\end{algorithmic}
\end{algorithm}

A few notes regarding the MCMC sampling scheme.  We must make a few adjustments to the R\&G's approach since we don't assign individual observations to a mixture component.  The modifications are: First, when updating the component parameters we don't have conjugacy so we use Metropolis and Metropolis-Hastings steps. Next, in the merge-split step, the term $P_{alloc}$ in R\&G's (11) doesn't exist in our formulation (therefore it is set to 1).  Finally, in the birth-death step there is no concept of empty components, thus in R\&G's (12) we include the likelihood ratio, since it no longer cancels, additionally, we set the number of empty clusters ($k_0$) to be the current number of components.

\subsection{For Multiple Histograms}

When considering multiple histograms, we model each individually while also allowing for common $\tilde{G}$ using the Dirichlet process (DP).  The DP induces a partition of the $N$ histograms and we denote the cluster label of the $i$th histogram to be $c_i$, and $\bm{c}$ as the vector of all cluster labels.  To account for the fact that different cluster label vectors may represent the same partition, we restrict $\bm{c}$ to be in canonical form, in which, it uniquely represents a partition using cluster-label notation.  We adopt the following canonical labeling of a partition. The first item is assigned to cluster~1. For each subsequent item, its cluster label is chosen as the smallest positive integer consistent with the existing partition: it is assigned label~1 if it belongs to the same cluster as item~1; otherwise, it may be assigned label~2, and so on, creating a new label only when the item does not belong to any previously formed cluster. In this way, if the partition contains $J$ clusters, each cluster label $c_j \in \{1,2,\ldots,J\}$.

Since our approach to clustering histograms is standard, we can use common MCMC approaches at the partition level of the model.  We reference the Chinese Restaurant Process \citep[CRP; see][]{pitman1996some} and its probability mass function (pmf) to ensure we have valid initial values for $\bm{c}$ and $\alpha$.  For posterior sampling of $\bm{c}$, we use the standard one-at-a-time updates of Neal's Algorithm 8 \citep{neal2000markov}.  The details of our MCMC scheme are outlined in Algorithm~\ref{alg:multhist}. 

\begin{algorithm}[h!]
\caption{MCMC Sampling for Multiple Histograms}
\label{alg:multhist}
\begin{algorithmic}[1]
\Require For $i = 1, \ldots, N$, histogram bin counts $\bm{v}_i$, bin boundaries $\bm{\tau}_i$, number of posterior samples $T$, tuning parameters ($\gamma_w$, $\gamma_\mu$, $\gamma_\sigma$), $m$ (a tuning parameter in \citeauthor{neal2000markov}'s Algorithm 8), common pmf for all $K_i$, prior distributions for $\alpha$, $K^{\star}$, $\bm{w}^{\star} \,|\, K^{\star}$, $\bm{\mu}^{\star} \,|\, K^{\star}$, and  $\bm{\sigma}^{2\star} \,|\, K^{\star}$. \vspace{0.01cm}
\State \textbf{Initialization:} For $i = 1, \ldots, N$, select values for $c_i$, $K_i$, $\bm{w}_i$, $\bm{\mu}_i$, $\bm{\sigma}_i^2$, and $\alpha$ such that $p(K_i,\bm{w}_i,\bm{\mu}_i,\bm{\sigma}_i^2\,|\,\bm{v}_i,\bm{\tau}_i) > 0$, $\bm{c}$ in canonical form such that $c_i=c_n$ if and only if $K_i=K_n$, $\bm{w}_i=\bm{w}_n$, $\bm{\mu}_i=\bm{\mu}_n$, and $\bm{\sigma}_i^2=\bm{\sigma}_n^2$, and $p_{crp}(\alpha, \bm{c}) > 0$, where $p_{crp}$ is the pmf of the CRP. \vspace{0.01cm}
\For{$t=1$ to $T$}
\State \textbf{Update $\alpha^{(t)}$:} We use the augmentation method of \cite{escobar1995bayesian}.
\For{$i=1$ to $N$} 
\State \parbox[t]{0.915\linewidth}{%
Update $c_i^{(t)}$ using \citeauthor{neal2000markov}'s Algorithm 8.
Note: the proposed $m$ clusters have associated parameter values
drawn from $Q_{\bm{\eta}}$, i.e., the distributions for
$K^{\star}$, $\bm{w}^{\star} \,|\, K^{\star}$,
$\bm{\mu}^{\star} \,|\, K^{\star}$, and
$\bm{\sigma}^{2\star} \,|\, K^{\star}$.}
\State Put $\bm{c}^{(t)}$ into canonical form.
\EndFor
\For{$l=1$ to $L^{(t)}$} (where $L^{(t)}$ is the number of clusters defined by $\bm{c}^{(t)}$) 
\Statex \hspace{\algorithmicindent} Note: likelihoods in this loop consist of all histograms with cluster label $l$.
\State \parbox[t]{0.91\linewidth}
{\textbf{Simultaneously simulate} $\bm{w}^{(t)}_{i}$ for all $i \in \{1,\ldots,N\}$ such that $c^{(t)}_i=l$.  This is a Metropolis-Hastings step with proposal $\bm{w}^{*}_{i} \sim $ Dirichlet$(\gamma_w \, \bm{w}^{(t-1)}_{i})$.} \vspace{0.01cm}

\State \parbox[t]{0.91\linewidth}{\textbf{Simultaneously simulate} $\bm{\mu}^{(t)}_{i}$ for all $i \in \{1,\ldots,N\}$ such that $c^{(t)}_i=l$.  This is a Metropolis step using proposal $\bm{\mu}^{*}_{i} \sim N(\bm{\mu}^{(t-1)}_{i}, \gamma_\mu \bm{I}_k)$.} \vspace{0.01cm}

\State \parbox[t]{0.91\linewidth}{\textbf{Simultaneously simulate} $\bm{\sigma}^2$ for all $i \in \{1,\ldots,N\}$ such that $c^{(t)}_i=l$. This is a Metropolis step using proposal $1/\bm{\sigma}^{2*}_i \sim N(1/\bm{\sigma}^{2(t-1)}_i, \gamma_\sigma \bm{I}_k)$.} \vspace{0.01cm}

\State \parbox[t]{0.91\linewidth}{\textbf{Propose a Split or a Merge:} For $\tilde{G}_l$, propose a split or merge for one of its mixture components as described in R\&G, with acceptance probabilities given in (11) of R\&G, with one caveat: $P_{alloc}\equiv 1$, since histogram bins are not assigned to specific components. If a split or a merge was accepted, update all $K_i^{(t)}$, $\bm{w}_i^{(t)}$, $\bm{\mu}_i^{(t)}$, $\bm{\sigma}_i^{2(t)}$ for all $i \in \{1,\ldots,N\}$ such that $c^{(t)}_i=l$.} \vspace{0.01cm}

\State \parbox[t]{0.91\linewidth}{\textbf{Propose a Birth or a Death:} For the histogram associated with cluster label $l$, propose the birth or a death of a mixture component as described in R\&G, with acceptance probabilities defined in (12) of R\&G with two caveats: we consider all components to be empty and include the likelihood (proposed/current) into the ratio. If a birth or a death was accepted, update all $K_i^{(t)}$, $\bm{w}_i^{(t)}$, $\bm{\mu}_i^{(t)}$, $\bm{\sigma}_i^{2(t)}$ for all $i \in \{1,\ldots,N\}$ such that $c^{(t)}_i=l$.} \vspace{0.01cm}

\EndFor
\EndFor
\end{algorithmic}
\end{algorithm}

\section{Simulations}\label{sec:simulations}

In this section we conduct simulation studies for the proposed models.  Before considering each model we outline some common details for all of the simulations.

To generate a histogram from sampled data we do the following:  First, sample the appropriate distribution. Next, two small amounts of random noise (both $iid$ from a U$(0,0.1)$) are subtracted from and added to the minimum and the maximum observations, respectively.  The resultant quantities are used to define $\tau_{1}$ and $\tau_{M-1}$ respectively.  The remaining bin boundaries are then equally spaced between $\tau_{1}$ and $\tau_{M-1}$.

For posterior sampling in each scenario we obtain 200,000 draws from the posterior distribution.  For burn-in we discard the first half of the chain, and for thinning we keep every 10th draw.  The resulting number of posterior samples is 10,000.

\subsection{For a Single Histogram} \label{sec:singleHistSims}

To demonstrate the properties of our method we considered the well known galaxy dataset \citep{postman1986probes,roeder1990density}.  After centering and scaling the data, we construct a plausible mixture of four normal distributions which could have generated the data and which we use as the ``true'' distribution for the simulations in this section.  This contrived CDF is:
\begin{align} \label{eq:true}
\begin{split}
F(x) =  
&\frac{7}{82}\,\Phi\left(\frac{x+2.436}{0.1}\right)+
\frac{13}{82}\,\Phi\left(\frac{x+0.25}{0.1}\right)+\\
&\frac{59}{82}\,\Phi\left(\frac{x-0.35}{0.5}\right)+
\frac{3}{82}\,\Phi\left(\frac{x-2.677}{0.25}\right),
\end{split}
\end{align} 
where $\Phi(\cdot)$ is the CDF of the standard normal distribution.  

The following priors were used for the simulations in Section \ref{sec:singleHistSims}.
\begin{align*}
    K &\sim  \text{Geometric}(0.15) \times \text{I}\,(1 \leq K \leq 30)\\
    \bm{w} \mid K &\sim \text{Dirichlet}(1,\ldots,1)\\
    \bm{\mu} \,|\, K &\sim \text{Normal}\left(\bm{0},0.01 \times \bm{I}_{K}\right) \times \text{I}(\mu_1 < \mu_2 < \cdots < \mu_{K}) \\
   1/ \sigma^{2}_k \mid K &\iid \  \text{Gamma}(1,0.01) \text{ for } k=1,\ldots,K.
\end{align*}
We specified the prior on $K$ to roughly approximate the average number of clusters for the CRP with mass parameter set to 1.  The other priors were chosen to be sufficiently diffuse for these scenarios and are dependent on the assumption that the data are roughly centered and scaled.

The computations in this article were accomplished on a Linux server with 2 Intel Xeon 6787P processors with a total of 503GB of RAM.

In this first simulation we assumed several sample sizes and number of possible bins and determine how close we estimate the truth using the model in \eqref{eq:singleHistModel}.  To determine ``closeness'' between two measures $\nu_1$ and $\nu_2$ 
we use the Wasserstein-$1$ distance which can be defined in a one dimensional setting as:
\begin{equation} \label{eq:wass}
W_1(\nu_1,\nu_2) = \int_{-\infty}^{\infty} \big|\, F_1(x)-F_2(x) \,\big|\ dx
\end{equation}
\citep{wassDist}, where $F_1$ and $F_2$ are the respective CDFs.  Additionally, the chosen sample sizes were 100, 200, 400, 800, 1600, 3200, and 6400.  The number of bins varied from 5 to 50 by increments of 5.  For each of these samples sizes and number of bins, 30 datasets were randomly generated.  For each dataset a model was fit, and the average posterior CDF was calculated from the posterior samples.  Specifically, each posterior draw produced a mixture distribution with an associated CDF, and all posterior CDFs were averaged to estimate the mean posterior CDF.  Using this mean posterior CDF, the Wasserstein distance was calculated with respect to the true data generating distribution given in \eqref{eq:true}.  The average distance for the 30 replications of each sample size and number of bins represents a single point in the following plots. 

Figure \ref{fig:Converge2True2} shows the results of the simulation.  For a fixed sample size, the average distance from the true CDF initially decreases from 5 bins to 10 bins, and from 10 bins to 15 bins.  However, after increasing the number of bins to 25 or 30, the average distance from the truth doesn't show meaningful improvement. 
The plot also shows that as the sample size increases, the distance from the truth decreases (as one might expect).  Thus more data appears to produce a better mixture model density estimate.

\begin{figure}[ht!]
    \centering
    \includegraphics[width=.5\paperwidth]{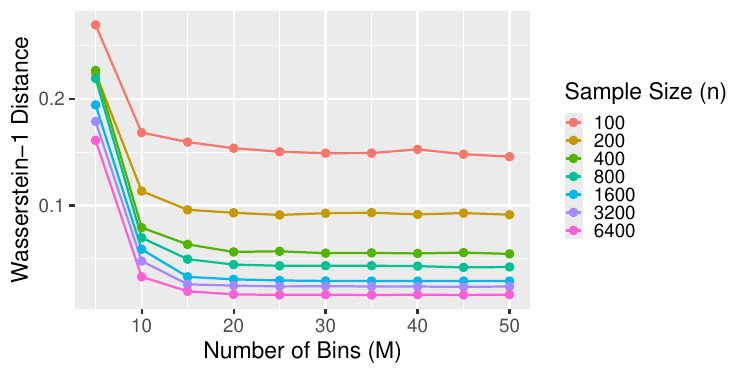}
    \caption{The results of a simulation study which varied the sample size and number of bins.  The results are compared with the Wasserstein's distance from the true data generating distribution.  It is evident that a higher number of samples improves performance and, to a point, more bins also produces better performance.}
    \label{fig:Converge2True2}
\end{figure}

The same simulation results are shown in Figure \ref{fig:Converge2True1} with the lines and x-axis changing roles.  From this view, we see that the performance of 5 bins is quite bad, but as the number of bins increases, the performance stabilizes.  It is difficult to distinguish any meaningful difference in the performance between bin sizes of 25-50. 

\begin{figure}[ht!]
    \centering
    \includegraphics[width=.5\paperwidth]{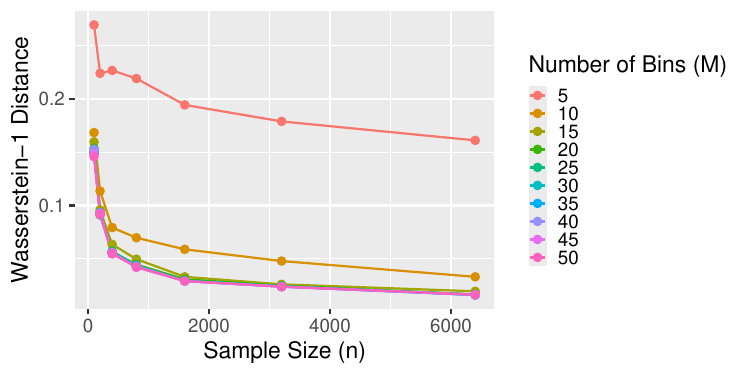}
    \caption{Similar to Figure \ref{fig:Converge2True2}, this plot shows that using more bins allows for improved performance with larger samples.  With a large number of samples, the performance of 25-50 bins is difficult to distinguish.}
    \label{fig:Converge2True1}
\end{figure}

The results of this simulation study indicate that the model appears to be consistent (as shown in Section \ref{sec:consistency}) as the number of samples and bins increase.  Additionally, from a practical perspective, the number of bins has a significant effect on the model's accuracy for a given sample size. 

\subsubsection{Comparison with the DPM Model}
\label{sec:DPMComp}

In this section we compare our model with the traditional Dirichlet process mixture (DPM) model to the original data.  This is an interesting comparison in that our model and the DPM model are very similar.  The primary difference is that the DPM model uses all the data, whereas our proposed model only uses the binned data.  Using our model, we demonstrate how the computational time scales much better (in $n$) than the DPM, and that the loss of accuracy can be minimal.

To date, fitting a DPM to a very large dataset is not computationally feasible.  However, our method scales extremely well with sample size.  Using the previous simulation scenario, we compared our method to a DPM model analyzing the same (but unbinned) data using the BNPmix R package \citep{BNPmix}. 
Again, using 30 replicates for each sample size we plot the averaged results.
The timing numbers are shown in Figure \ref{fig:CompWithDPM1}.  One can observe with each horizontal dashed line, that as the number of samples increases, the 
computational time for the DPM model steadily moves up.  The computational time of our method, shown in solid lines, appears very similar for each of the four sample sizes.  However, the computational time of our method increases, but only as a function of the number of bins, as shown by the increasing slope of the solid lines.  
It is important to note that, in this case, the computational speed of our method does not appear to be meaningfully affected by the sample size.

\begin{figure}[ht]
    \centering
    \includegraphics[width=.5\paperwidth]{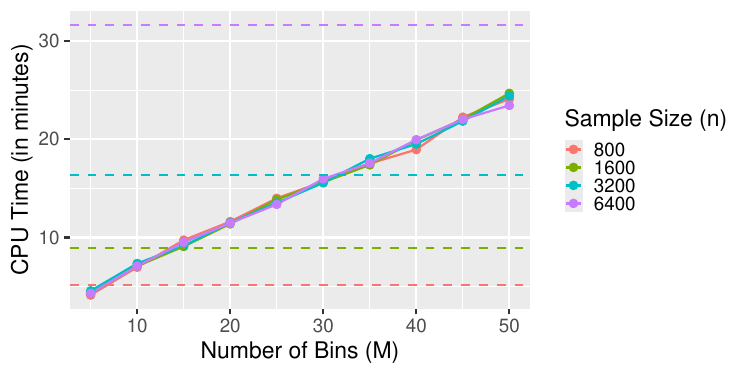}
    \caption{Timing comparisons with the DPM at four different sample sizes.  As the number of samples goes up the DPM's computation increases (dashed lines), however, our method is nearly invariant to the sample size, but its computation time increases as the number of bins increases (solid lines).}
    \label{fig:CompWithDPM1}
\end{figure}

Although the timing numbers look promising, the model accuracy is also of prime consideration.  Figure \ref{fig:CompWithDPMAccuracy1} shows the accuracy of our model (solid lines) compared with the DPM model.  It is clear, as long as there are a sufficient number of samples and bins, the accuracy of our model is not an issue.  And under certain conditions, it appears to have similar accuracy as the DPM model (dashed lines). 

\begin{figure}[ht]
    \centering
    \includegraphics[width=.5\paperwidth]{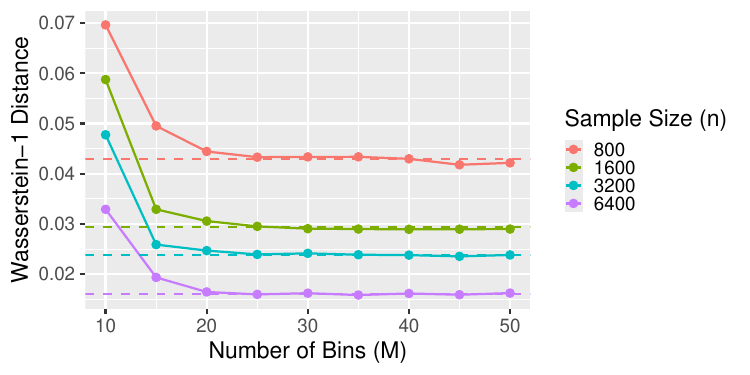}
    \caption{Comparisons of accuracy with the DPM model (dashed lines) at four different sample sizes.  Clearly with sufficient data and number of bins our method (solid lines) achieves the accuracy of the DPM model.}
    \label{fig:CompWithDPMAccuracy1}
\end{figure}

Figure \ref{fig:CompWithDPMAccuracy1} clearly shows that using a sufficient number of bins is important to model behavior, but in this example, the gains in accuracy tail off after using 30 or more bins when the sample size is sufficiently large.

\subsubsection{Misspecified Comparison with the DPM Model}

In the previous simulations the ``true'' model was assumed to be a mixture of Gaussian kernels.  In these simulations we explore the effect of modeling a mixture distribution using a mixture of non-Gaussian kernels.  
The ``true'' model is assumed to have a density of:
\[f(x) = .4 \times Gamma(x+5,6,3) + .2 \times t(x,5) + .4 \times t(x-4,3). \]
Using this model as the data generating distribution, the model in \eqref{eq:singleHistModel} is now misspecified in the kernel.

We consider the same comparisons with the DPM as in Section \ref{sec:DPMComp}. Figure \ref{fig:CompWithDPM} shows that the sample size now appears to play a role in the computation time.  As the bins increase we see a separation of the solid lines.  A major factor in this phenomenon is that, as the sample size $n$ increases, more mixture components $K$ are needed to better describe the asymmetry and the thicker tails of the ``true'' model.  So although the sample size influences the computation time for our proposed model, the primary factor in computing time is still the number of bins.  We also notice that for the sample sizes of 3200 and 6400, the computation times for the DPM have noticeably increased (compared to Figure \ref{fig:CompWithDPM1}).
\begin{figure}[ht]
    \centering
    \includegraphics[width=.5\paperwidth]{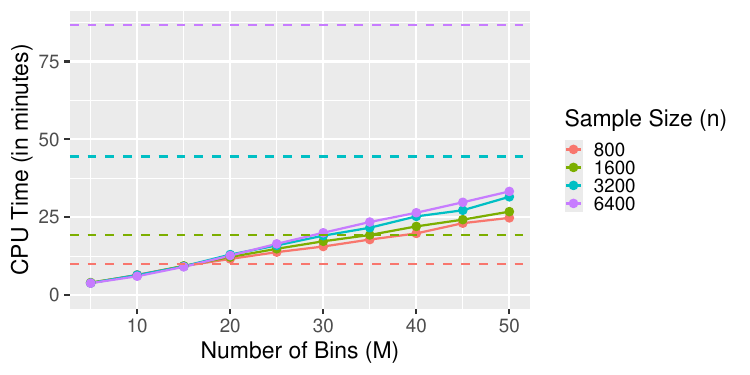}
    \caption{Timing comparisons with the DPM for the misspecified case using four different sample sizes.  Clearly as the number of samples goes up the computation time (in minutes) for the DPM increases, however, our method is somewhat invariant to the sample size, but its computation slows as the number of bins increase.}
    \label{fig:CompWithDPM}
\end{figure}

In regards to accuracy with a misspecified kernel in the model, our model's accuracy is slower to match the accuracy of the DPM. This is shown in Figure \ref{fig:CompWithDPMAccuracy}.  So while the solid lines are approaching the dashed lines (of similar colors), they do not match as closely as with the case of the correctly specified kernels.

\begin{figure}[ht]
    \centering
    \includegraphics[width=.5\paperwidth]{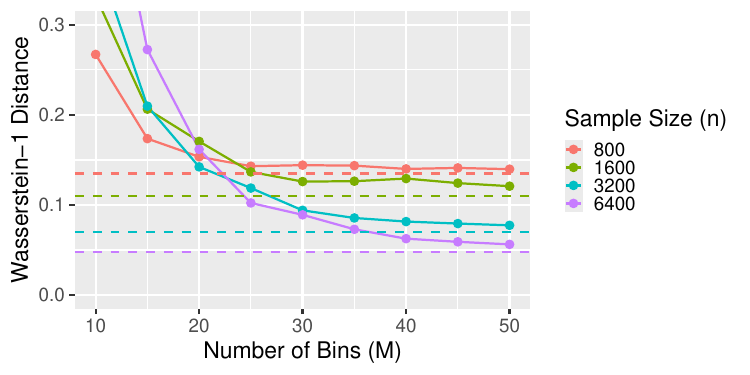}
    \caption{Comparisons of accuracy with the DPM model for the misspecified case using four different sample sizes.  Clearly with sufficient data and number of bins our method begins to match the accuracy of the DPM model.}
    \label{fig:CompWithDPMAccuracy}
\end{figure}

\subsection{Clustering Multiple Histograms} \label{sec:simsMultHists}

We evaluate the performance of the hierarchical model in \eqref{eq:multipleHists} for clustering multiple histograms. The primary objectives of this simulation are to assess the model’s ability to recover the true clustering structure while simultaneously estimating the underlying population distributions. We expect the model performance to primarily be a function of three things: the sample size, the number of bins, the number of histograms being clustered.  Generally, computation time will slow as these quantities increase, but the accuracy should improve.

The following metrics provide a meaningful way to assess model performance. 
To determine how well the model estimates the density for a given histogram we again consider the Wasserstein-1 distance, as defined in \eqref{eq:wass}.  To measure how well the model clusters similar histograms, we use the adjusted Rand index (ARI) (\citealt{hubert1985comparing}). The ARI determines the similarity of two given partitions, with values ranging between -1 and 1, with a value of 1 denoting two identical partitions.  An ARI value of 0 suggests partition agreement  equivalent to random assignment.

In our simulation scenario we consider 100 histograms.  The goal is to model their underlying generating densities and determine if any histograms are generated from the same population, by clustering them together.  
Each true population density is the mixture of two normal distributions as follows:
\begin{align} \label{eq:trueMultHists}
\begin{split}
F_1(x) &= 0.4\,\Phi\left(\frac{x+1}{0.8}\right)+
0.6\,\Phi\left(\frac{x-1}{0.8}\right),\\
F_2(x) &= 0.6\,\Phi\left(\frac{x+1}{0.8}\right)+
0.4\,\Phi\left(\frac{x-1}{0.8}\right),\\
F_3(x) &= 0.5\,\Phi\left(x-1\right)+
0.5\,\Phi\left(x-3\right),\\
F_4(x) &= 0.5\,\Phi\left(x+1\right)+
0.5\,\Phi\left(x+3\right).
\end{split}
\end{align} 
The simulation is designed by generating 10 histograms from Population 1 (i.e., $F_1(x)$ in \eqref{eq:trueMultHists}), 40 histograms from Populations 2 and 4 and the final 10 histograms from Population 3.  The densities for these populations are shown in Figure \ref{fig:trueDensities}.

Here we specify the choices of the priors and hyperparameters for the model in \eqref{eq:multipleHists} for this simulation study.  They are:
\begin{align*}
    \alpha &\sim \text{Gamma}(2,4)\\
    K^{\star} &\sim  \text{Geometric}(0.5) \times \text{I}\,(1 \leq K^{\star} \leq 45)\\
    \bm{w}^{\star} \mid K^{\star} &\sim \text{Dirichlet}(1,\ldots,1)\\
    \bm{\mu}^{\star} \,|\, K^{\star} &\sim \text{Normal}\left(\bm{0},0.01 \, \bm{I}_{K^{\star}}\right) \times \text{I}(\mu_1^{\star} < \mu_2^{\star} < \cdots < \mu_{K^{\star}}^{\star}) \\
   1/ \sigma^{2 \star}_k \mid K^{\star} &\iid \  \text{Gamma}(1,0.01) \text{ for } k=1,\ldots,K^{\star}.
\end{align*}

\begin{figure}[ht]
    \centering
    \includegraphics[width=.615\paperwidth]{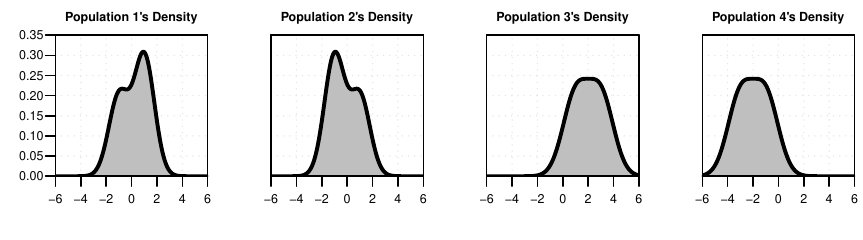}
    \caption{The four distributions which generated the data for this simulation study.}
    \label{fig:trueDensities}
\end{figure}

We specified 7 sample sizes and number of bins configurations, each defining a separate row in Table \ref{tab:clusteringSim}.  On average, we expect the model accuracy to improve as both the number of bins and sample size increases.  For each sample size and number of bin configurations we ran the model 30 times and took the mean of the results.

The four center columns in Table \ref{tab:clusteringSim} show the mean Wasserstein-1 distance for each histogram's posterior mean CDF estimate to its true generating CDF.  As expected, the distance to the true distributions generally decrease with more samples and more bins.  Additionally, the populations that have 40 histograms associated (versus 10)  appear to have smaller distances to the truth, which we attribute to more sharing of information within a cluster.  In the second to last column we consider the ARI between the random partition induced by sampling from a DP in \eqref{eq:multipleHists} and the true clustering (the 4 population clusters).
The mean posterior ARI is shown, and it also typically improves with more samples and more bins.  
The mean time to sample from the posterior is given in the last column of Table \ref{tab:clusteringSim}.  

\begin{table}
    \centering
    \caption{The results from simulating each cell 30 times and taking the mean values. As can be seen, the distance from the true distribution decreases for each population as the sample size and the number of bins increases.  Also, the clustering of like populations improves with an increase of sample size and number of bins as demonstrated with the mean adjusted Rand index (ARI).}
    \label{tab:clusteringSim}
    \begin{tabular}{cc|cccc|cc}
    & & \multicolumn{4}{c|}{Wasserstein-1 Distance} & \multicolumn{2}{c}{Mean} \\
    Sample & Number & \multicolumn{4}{c|}{Population:} & Posterior & Time\\
    Size & of Bins & 1 & 2 & 3 & 4 & ARI & (hours) \\ \hline
    100 &   10 & 0.297 & 0.078 & 0.083 & 0.078 & 0.847 & 8.4 \\ 
    100 &   15 & 0.247 & 0.091 & 0.080 & 0.078 & 0.877 & 9.3 \\ 
    1,000 &   15 & 0.012 & 0.007 & 0.015 & 0.008 & 1.000 & 11.9 \\ 
    1,000 &   20 & 0.012 & 0.006 & 0.014 & 0.008 & 1.000 & 15.8 \\ 
    10,000 &   20 & 0.010 & 0.012 & 0.010 & 0.011 & 0.999 & 44.1 \\ 
    10,000 &   25 & 0.014 & 0.013 & 0.013 & 0.015 & 1.000 & 61.2 \\ 
    10,0000 &   25 & 0.009 & 0.008 & 0.009 & 0.009 & 0.982 & 70.1 \\ 
      \hline
    \end{tabular}
\end{table}

The results from this simulation study demonstrate that the proposed framework effectively clusters histogram-valued data while recovering the underlying distributions. The Dirichlet process prior facilitates information sharing across histograms, and in principle, improves performance over considering each histogram in isolation. As demonstrated, the method is robust to heterogeneous binning boundaries, making it suitable for applications involving aggregated data.

\section{Applications}\label{sec:applications}

In this section we demonstrate our method in two scenarios.  In the first, we consider modeling the density of only one histogram, but with a large amount of data.  The second application considers many histograms in which we estimate the population density and, in the hierarchical fashion of the model in (\ref{eq:multipleHists}), allow some sharing of estimated population densities among the histograms.  These two applications demonstrate the usefulness of the models in (\ref{eq:singleHistModel}) and (\ref{eq:multipleHists}) proposed in this paper.

\subsection{U.S. Airline Flight Times}
The U.S. Bureau of Transportation Statistics (BTS) publishes monthly data which tracks the flight time for all domestic flights in the United States.  We consider a four month window from January to April 2025.  As one might imagine, there are many domestic U.S. flights in one month.  During this four month time-frame there were roughly 2.23 million flights, of which roughly 2.19 million have a recorded flight time.  These flight times are recorded in whole minutes.  The flights ranged in length from 8 minutes to 715 minutes. A barchart of the flights times less than 350 minutes is shown in Figure \ref{fig:histVSpost} with thin black vertical lines.  One can see from this plot, these flight times are multi-modal and skewed to the right.

To model the density of these flight times using the traditional DPM model is very time consuming.  However, using our method, we can fit an accurate mixture model to these data in a reasonably fast time.  This application demonstrates our method's ability to model a large amount of data using binning.  We consider several versions of our model based on the number of bins to show the time and accuracy trade-off for more or fewer bins.

We employ the model in \eqref{eq:singleHistModel} and the choices for distributions on $K$, $\bm{w} \,|\, K$, $\bm{\mu} \,|\, K$, and $\bm{\sigma}^2 \,|\, K$ are:
\begin{align}
    \label{eq:FlightTimes}
    \begin{split}
        K &\sim \text{Geometric}(0.5) \times \text{I}\,(1 \leq K \leq 30)\\
        \bm{w} \,|\, K &\sim  \text{Dirichlet}\left(1/K,\ldots,1/K \right)\\
       \bm{\mu} \,|\, K &\sim \text{Normal}\left(\bm{0},0.01 \, \bm{I}_K\right) \times \text{I}(\mu_1 < \mu_2 < \cdots < \mu_K) \\
        1/\sigma^2_k \,|\, K &\iid \text{Gamma}(1,0.01), \text{ for } k=1,\ldots,K.\\
    \end{split}
\end{align}

There are 680 unique flight time values, each recorded in whole minutes.  Because we have the actual observations, we can experiment somewhat with accuracy and computational trade-offs.  We consider five different size configurations for the bins $\bm{\tau}$.  With no loss of data fidelity we start with 723 bins, then combining different values of observations 363 bins, 183 bins, 93 bins and finally 48 bins.  For each scenario we specify $\bm{\tau}$ as follows:
\begin{align*}
    \text{723 bins: }\ \bm{\tau} &= (-\infty,-0.5,0.5,1.5,\ldots,718.5,719.5,\infty),\\
    \text{363 bins: }\ \bm{\tau} &= (-\infty,-0.5,1.5,3.5,\ldots,717.5,719.5,\infty),\\
    \text{183 bins: }\ \bm{\tau} &= (-\infty,-0.5,3.5,7.5,\ldots,715.5,719.5,\infty),\\
    \text{93 bins: }\ \bm{\tau} &= (-\infty,-0.5,7.5,15.5,\ldots,711.5,719.5,\infty),\\
    \text{48 bins: }\ \bm{\tau} &= (-\infty,-0.5,15.5,31.5,\ldots,703.5,719.5,\infty).\\
\end{align*}
\begin{figure}[ht]
    \centering
    \includegraphics[width=.45\paperwidth,trim=.15cm .5cm 1cm 2cm,
    clip]{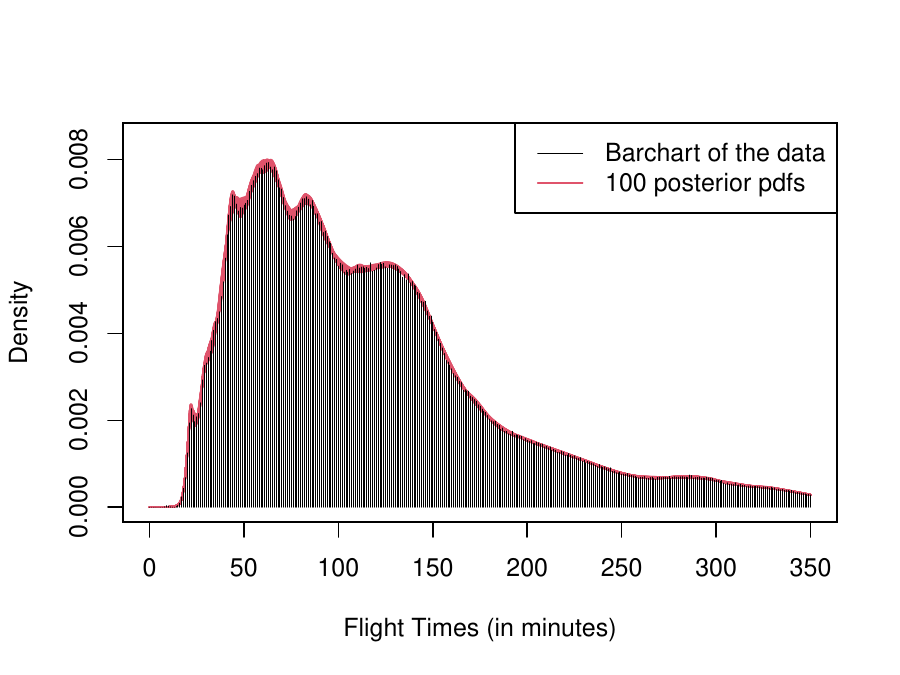}
    \caption{A comparison of a barchart of the flight times less than 350, in black, compared with 100 posterior draws of the fitted mixture model pdf (utilizing 723 bins), in red.}
    \label{fig:histVSpost}
\end{figure}
The bin counts are computed for each $\bm{\tau}$.  Following which, the $\bm{\tau}$ vectors are then centered and scaled by subtracting 120 and dividing that result by 72.

We ran 10 chains for each model. Similar to the simulations, for each chain we obtained 200,000 posterior draws, discarded half for burn-in, and then thinned by keeping every 100th sample.  Combining the 10 chains resulted in 10,000 posterior samples.

\begin{figure}[ht]
    \centering
    \begin{subfigure}{0.49\linewidth}
        \centering
        \includegraphics[width=\linewidth,trim=.1cm .5cm .1cm 1.5cm,
    clip]{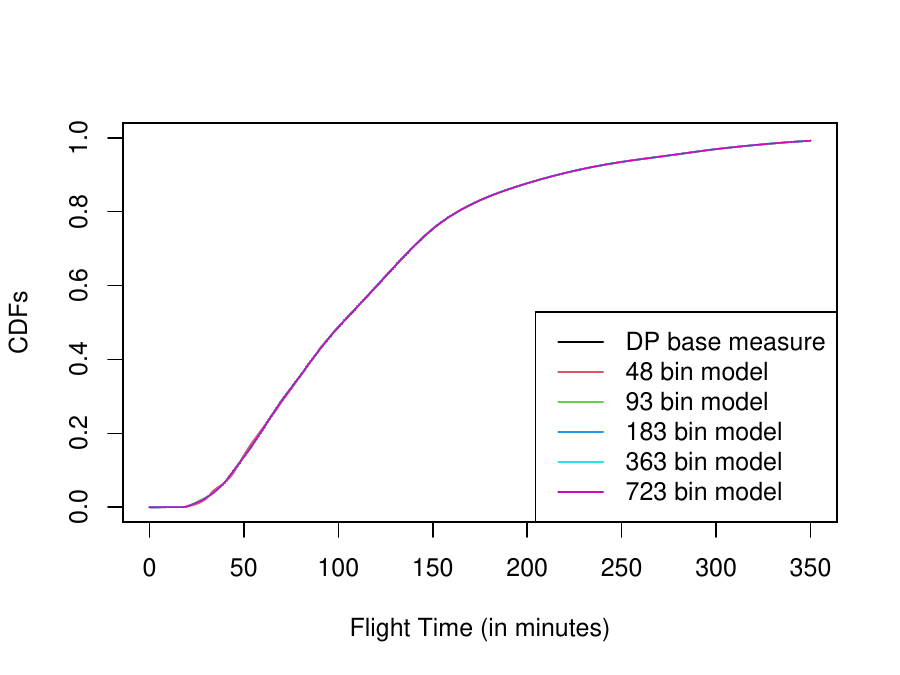}
        \caption{CDF estimates of flight time data.}
        \label{fig:ecdfVSpost1}
    \end{subfigure}
    \hfill
    \begin{subfigure}{0.49\linewidth}
        \centering
        \includegraphics[width=\linewidth,trim=.1cm .5cm .1cm 1.5cm,
    clip]{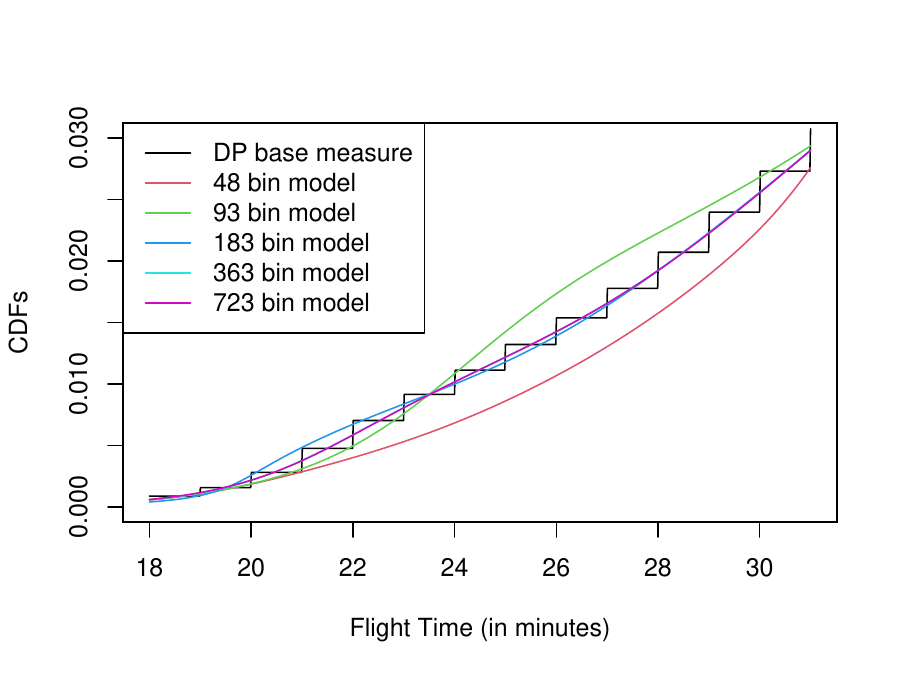}
        \caption{Zoomed-in view of left plot.}
        \label{fig:ecdfVSpost2}
    \end{subfigure}
    \caption{Comparison of the DP posterior base measure to the mean posterior CDF using the model in (\ref{eq:singleHistModel}) with 5 different bin sizes.  In the left plot it appears that all the models are in fairly close agreement.  The left plot is a zoomed-in view and shows some differences between the 5 models (in color), where the DP is in black.  As the number of bins increases, the closer the models get to a smoothed version of the DP.}
    \label{fig:ecdfVSpost}
\end{figure}

Some results for the model with 723 bins are displayed in Figure \ref{fig:histVSpost}.  The figure shows a barchart of the raw data in black (with flight times less than 350 minutes), plotted with 100 draws of the posterior mixture density in red lines. 
At a glance the model fits the data quite well, and captures the multimodal and skewed shape.  As expected, some of the individual spikes are smoothed over. 

In Figure \ref{fig:ecdfVSpost} we compare the mean posterior CDFs for the five different binning schemes with the posterior base measure of a DP fit directly to the data.  In other words, $x_i\,|\,G \iid G$ for $i\in\{1,\ldots, 2188776\}$, and $G\sim DP(\alpha, G_0)$.  The posterior of $G\,|\,\bm{x}$ is also a DP with base measure:
\begin{equation*} \label{PosteriorDP}
    \frac{\alpha G_0(t) + n \hat{F}(t)}{\alpha+n},
\end{equation*}
where we selected $\alpha=1$, $G_0(t)$ is the CDF of  $\text{Gamma}(1,1/55)$, $n=2,188,776$, $\hat{F}(t)$ is the empirical CDF, and $t$ represents time in minutes.  
This comparison demonstrates that our model, across several possible binning schemes, can capture the behavior of the estimated distribution in a manner comparable to a largely empirical approach, while importantly providing a smooth estimate.

It is difficult to detect any difference between the CDFs in Figure \ref{fig:ecdfVSpost1}, with the exception of some very slight bumps before time 50 (where a small mode exists).  A zoomed-in version of the CDFs is displayed in Figure \ref{fig:ecdfVSpost2}.  At this resolution, we can detect that the models with fewer bins have more difficulty capturing the more subtle behavior of the data.  As the number of bins is increased the model accuracy also appears to improve.

\begin{table}
    \centering
    \caption{Comparison of five different bin size configurations for the model in (\ref{eq:singleHistModel}) and the DPM model for the U.S. flight time data.  The \textit{CPU Time} is the average minutes in CPU time used to compute 200,000 posterior samples (the lower bound for the DPM estimate is an extrapolation from only 1,000 samples).  $E[K]$ and SD$(K)$ are the average number of posterior clusters and the standard deviation, respectively.}
    \label{tab:fltResults}
    \begin{tabular}{c|ccc}
      Number of Bins& CPU Time (minutes)  &  $E[K]$ & SD$(K)$\\ \hline
       48 & 47.9 & 17.7 & 2.1 \\
       93 & 102.7 & 21.6 & 2.6 \\
       183 & 200.8 & 23.3 & 2.2\\
       363 & 403.0 & 24.6 & 3.2 \\
       723 & 719.2 & 22.6 & 2.4 \\ 
       $\infty$ (DPM) & $>$3,000 & --- & --- \\\hline
    \end{tabular}
\end{table}

The remainder of the results are contained in Table \ref{tab:fltResults}.  The average CPU time for each chain is reported along with the posterior mean of $K$ and its standard deviation.  We also aimed at comparing with DPM model as we did for the simulated data. However, timing for the DPM model was difficult to estimate.  When running 1,000 samples we measured a little over 1.5 seconds per MCMC iteration.  However, the chains had not converged and the number of mixture components $K$ was lower than we'd expect---this has a dramatic effect on the timing.  When trying to run more than 2,000 iterations, the software would crash due to memory problems.  Therefore, the estimated CPU time for the DPM model is an underestimate.  This is a clear example where the standard BNP approach has problems due to the large amount of data.  However, our model does well, it includes uncertainty, and was computed in the range of 1 to 12 hours (depending on the number of bins).  For timing considerations, it is  important to note that the BNPmix package is using compiled code and our code is not.  

\subsection{U.S. COVID-19 Mortality Counts}

The U.S. Center for Disease Control (CDC) published the COVID-19 mortality counts for each state.  According to the CDC website, the data contain ``deaths involving COVID-19, pneumonia, and influenza reported to [National Center for Health Statistics] by sex, age group, and jurisdiction of occurrence.''
These data are totals from the beginning of the pandemic until September 27, 2023.  They reported counts by state, age group, and sex in addition to other information.  In total, there were more than 1.1 million deaths reported.  We consider the following histograms.  Mortality counts for the following age groups:
0-17 years,
18-29 years,
30-39 years,
40-49 years,
50-64 years,
65-74 years,
75-84 years,
85 years and over.
Data were collected for all 50 U.S. states in addition to the District of Columbia, Puerto Rico, and New York City.  In the data, there were 106 tables of mortality counts (53 geographical areas multiplied by 2, for females and males).  For ease of terminology we will use the term \textit{state} for the geographical regions although District of Columbia, Puerto Rico, and New York City are not U.S. states.

It is of note that the CDC did not report mortality counts for a bin when numbers were below 10.  However, we were able to solve for some missing values by deduction from the totals.  For those that we could not ascertain, we combined bins and bin counts.  This approach resulted in 94 state/sex groups having all 8 bins available, 11 groups having 7 bins, and 1 group with only 6 bins.  Additionally, to better control behavior of the right tail of the mixture model we created a somewhat arbitrary upper bound of age 100 for the 85 years and over bin.  The resulting mixture models are somewhat sensitive to this choice, however, 100 seemed to be reasonable given that less than 0.03\% of the U.S. population is older than 100 \citep{pew_centennials_2024}. Bins for ages less than 0 and greater than 100 are forced to have counts of zero.  For this application we used common centering and scaling constants for all histograms to ensure we didn't degrade the model's ability to detect differences in distributional shifts.  Code for data preparation is available upon request. 

Although one could model the data only by state, we felt that sex was an important variable that should be used to better explain the data.  Overall, the reported CDC mortality numbers were roughly 52K females and 63K males during that time-frame.  Additionally, using the bin midpoints, the average age at death for females and males was 76.6 and 72.9 years old, respectively.  With these differences in mind, it seemed appropriate to differentiate the data by sex. 

In this application $N=106$, $M_i=8$ for all state/sex groups except 12 (as described above). 
We apply the model given in \eqref{eq:multipleHists}.  Specific choices were made for several of the priors and hyperparameters: 
\begin{align*}
    \alpha &\sim \text{Gamma}(2,4)\\
    K^{\star} &\sim  \text{Geometric}(0.9) \times \text{I}\,(1 \leq K^{\star} \leq 8)\\
    \bm{w}^{\star} \mid K^{\star} &\sim \text{Dirichlet}(1,\ldots,1)\\
    \bm{\mu}^{\star} \,|\, K^{\star} &\sim \text{Normal}\left(\bm{0},0.01 \, \bm{I}_{K^{\star}}\right) \times \text{I}(\mu_1^{\star} < \mu_2^{\star} < \cdots < \mu_{K^{\star}}^{\star}) \\
   1/ \sigma^{2 \star}_k \mid K^{\star} &\iid \  \text{Gamma}(1,1) \text{ for } k=1,\ldots,K^{\star}
\end{align*}
Since the max number of bins, with a non-zero count, is 8, we bound $K^{\star}$ in $\{1,2,\ldots,8\}$ to ensure there is at most 8 mixture components for every histogram, thus alleviating problems related to overfitting. 

To obtain 200,000 posterior samples for one chain it took an average of roughly 21 hours. Half the samples were discarded for burn-in, and for thinning, every 100th draw was kept.  We ran 10 chains and obtained a total of 10,000 posterior samples.   

On average there were about 6 clusters in the posterior partition, with a min and a max of 4 and 8, respectively.  We found a point estimate of the partition by using the default settings of \textit{salso} \citep{dahl2022search}.  
The resulting partition is shown in Figure \ref{fig:sidebysideUS} (one map is for female data and the other for male data).  
It was not practical to display the District of Columbia, New York City, and Puerto Rico on the maps, however, they were assigned to clusters: 2, 4, and 4 for females and 1, 2, and 2 for males, respectively (see the legend in Figure \ref{fig:sidebysideUS}).  It is clear from both maps that there are spatial relationships.  Additionally, there are clear sex effects on mortality---no states have both females and males in the same cluster. 

\begin{figure}[ht]
    \centering
    \includegraphics[width=.61\paperwidth,trim=1.5cm .3cm .6cm .15cm,
    clip]{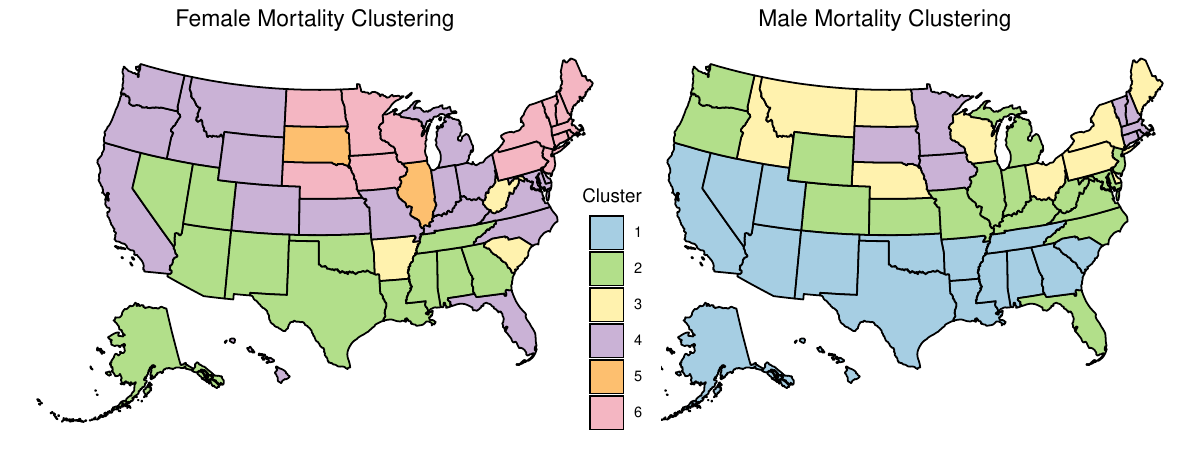}
    \caption{COVID-19 mortality clustering by U.S. state and sex. There are clear spatial effects as well as differences in mortality by sex.  No states have females and males in the same cluster. }
    \label{fig:sidebysideUS}
\end{figure}

\begin{figure}[H]
    \centering
    \includegraphics[width=.4\paperwidth]{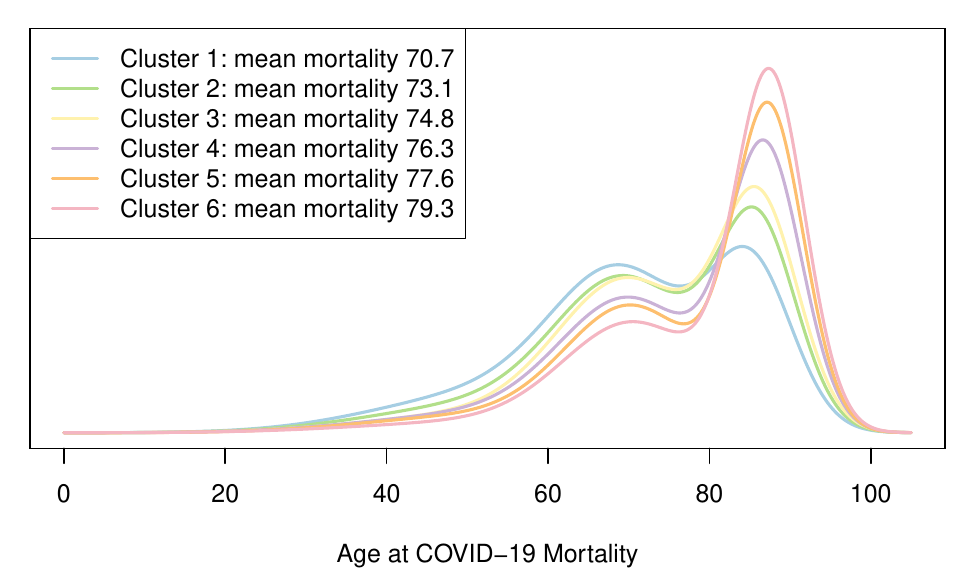}
    \caption{Posterior mean density for each cluster (fixed at the salso estimate), as shown in Figure \ref{fig:sidebysideUS}.  The average age at mortality for each cluster is given in the legend.}
    \label{fig:mortalDistns}
\end{figure}

Re-running the model with the partition fixed at the \textit{salso} point estimate, we show the posterior mean mortality distributions for each of the six clusters in Figure \ref{fig:mortalDistns}.  Together, Figures \ref{fig:sidebysideUS} and \ref{fig:mortalDistns} show that COVID-19 mortality distributions have a substantial geographic effect as well as a strong sex effect. 

While it might be tempting to make broad statements on COVID-19 mortality over geography and sex, we highlight that age demographics of each state surely influence the results.  Additionally, it is well known that in the U.S., females have a longer life expectancy than males \citep{Murphy2024_MortalityUS2023}, so it is not surprising that for every state, the mortality distributions for females have a higher mean than those for males.  

\section{Discussion}\label{sec:disc}

This paper introduces two novel Bayesian mixture models designed for analyzing data available in binned or histogram form, a common scenario in applications involving privacy, summarization, or storage constraints. The proposed models fit a mixture distribution, with a focus on mixtures of normal components, to approximate the underlying continuous density of the population. Additional innovations are the models' ability to scale efficiently with the number of observations, offering a practical alternative to traditional Bayesian nonparametric methods such as Dirichlet process mixture models, which become computationally burdensome for large datasets.  The models and analysis were restricted to the case of histograms arising from univariate data. Generalizations to higher dimensions (2 or 3) might be possible in practice, but the computations would become harder.

We established theoretical support for the first model by proving consistency results under mild regularity conditions, ensuring that the inferred distribution converges to the true data-generating process as the sample size increases. The simulation studies further demonstrate the models' performance in terms of both estimation accuracy and computational speed, comparing favorably with existing approaches. These experiments highlight the method's scalability and robustness across a variety of scenarios.

We conducted a simulation study to demonstrate some desirable properties of the second model.  Additionally, we applied the second model to a COVID-19 mortality dataset composed of multiple histograms. By incorporating a Dirichlet process clustering mechanism, the model allowed for joint analysis of multiple populations, enabling information sharing and providing posterior probabilities for homogeneity across U.S. states and sexes. This application illustrates the practical utility of the approach in complex, real-data settings.

Together, the theoretical guarantees, empirical performance, and real-world applicability make these models a compelling and scalable approach for density estimation and clustering of binned data.

\section*{Acknowledgments}

Fernando Quintana was partially funded by ANID through the FONDECYT Regular project 1260012.

\clearpage
\appendix

\section{Proof of Theorem \ref{thm:mfm-binned-hellinger}}

For the proof, write
\[
B_{n,m} = \left(\frac{m - 1}{M_n},\frac{m}{M_n}\right],\quad m = 1,\ldots,M_n.
\]
For a density \(f\), define the binning map
\[
p_m^{(n)}(f) = \int_{B_{n,m}} f(x)dx,\quad m = 1,\ldots,M_n,
\]
and write
\[
p^{(n)}(f) = \left(p_1^{(n)}(f),\ldots,p_{M_n}^{(n)}(f)\right).
\]
The associated binned density is
\[
T_{M_n}f(x) = M_n \sum_{m = 1}^{M_n} p_m^{(n)}(f)\mathbf{1}_{B_{n,m}}(x).
\]
For a posterior draw \(g\) from model \eqref{eq:singleHistModel}, \(g_b = T_{M_n}g\). We also write
\[
p^{*,n} = p^{(n)}(g^*).
\]
The observed count vector satisfies
\[
\bm v \sim \operatorname{Multinomial}(n,p^{*,n}).
\]
Finally, let
\[
\eta_n = \sqrt{\frac{M_n \log n}{n}} \asymp M_n^{-s} \asymp \left(\frac{\log n}{n}\right)^{s / (2s + 1)}.
\]

We use the following four lemmas.

\begin{lemma}[Hellinger identity for binned densities]
For any two densities \(f\) and \(g\) on \([0,1]\),
\[
h^2(T_{M_n}f,T_{M_n}g) = h^2(p^{(n)}(f),p^{(n)}(g)),
\]
where the Hellinger distance on the right-hand side is the Hellinger distance between probability vectors.
\end{lemma}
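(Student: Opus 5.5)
The proof is a direct computation: the binned densities $T_{M_n}f$ and $T_{M_n}g$ are both piecewise constant on the same partition $\{B_{n,m}\}$, so the Hellinger integral splits into a finite sum over bins, and each term can be evaluated explicitly. No earlier result is needed beyond the definitions of $T_{M_n}$ and $p^{(n)}$ given above.

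First, I will note that on $B_{n,m}$ we have $T_{M_n}f(x) = M_n\, p_m^{(n)}(f)$ and $T_{M_n}g(x) = M_n\, p_m^{(n)}(g)$. The bins $B_{n,m}$, $m=1,\ldots,M_n$, partition $(0,1]$, and the point $0$ has Lebesgue measure zero. Therefore
\[
h^2(T_{M_n}f,T_{M_n}g) = \frac12\sum_{m=1}^{M_n}\int_{B_{n,m}}\left(\sqrt{M_n p_m^{(n)}(f)}-\sqrt{M_n p_m^{(n)}(g)}\right)^2 dx .
\]

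Next, the integrand is constant on each bin, and $|B_{n,m}| = 1/M_n$. The factor $M_n$ coming out of the square roots therefore cancels exactly with the bin length. This gives
\[
h^2(T_{M_n}f,T_{M_n}g) = \frac12\sum_{m=1}^{M_n}\left(\sqrt{p_m^{(n)}(f)}-\sqrt{p_m^{(n)}(g)}\right)^2 ,
\]
which is the squared Hellinger distance between the probability vectors $p^{(n)}(f)$ and $p^{(n)}(g)$, with the same factor $\tfrac12$ convention as in the continuous case.

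There is no real obstacle here. The only points needing care are the following.

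1. The normalization convention for the discrete Hellinger distance must include the factor $\tfrac12$, matching the continuous definition.
2. The vectors $p^{(n)}(f)$ and $p^{(n)}(g)$ must be genuine probability vectors. This holds because $f$ and $g$ are densities on $[0,1]$ and the bins cover $[0,1]$ up to a null set.
3. The cancellation relies on the bins being uniform, as in assumption 2. With general lengths $\ell_m$ the same computation still goes through, since $\ell_m\cdot(\sqrt{p_m/\ell_m}-\sqrt{q_m/\ell_m})^2 = (\sqrt{p_m}-\sqrt{q_m})^2$, so the identity holds for arbitrary partitions.
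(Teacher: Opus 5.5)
Your proof is correct and follows the paper's argument exactly: you split the Hellinger integral over the bins $B_{n,m}$, use that both binned densities are constant on each bin, and cancel the factor $M_n$ against the bin length $1/M_n$. Your side remark is also correct: the identity persists for non-uniform bins when the binned density is defined through $p_m/\ell_m$, as $g_b$ is in the main text.
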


\begin{proof}
By construction, \(T_{M_n}f\) and \(T_{M_n}g\) are constant on each bin. Therefore,
\begin{align*}
h^2(T_{M_n}f,T_{M_n}g) &= \frac{1}{2}\sum_{m = 1}^{M_n}\int_{B_{n,m}}\left(\sqrt{M_n p_m^{(n)}(f)} - \sqrt{M_n p_m^{(n)}(g)}\right)^2 dx \\
&= \frac{1}{2}\sum_{m = 1}^{M_n}\frac{1}{M_n}\left(\sqrt{M_n p_m^{(n)}(f)} - \sqrt{M_n p_m^{(n)}(g)}\right)^2 \\
&= \frac{1}{2}\sum_{m = 1}^{M_n}\left(\sqrt{p_m^{(n)}(f)} - \sqrt{p_m^{(n)}(g)}\right)^2 \\
&= h^2(p^{(n)}(f),p^{(n)}(g)).
\end{align*}
\end{proof}

\begin{lemma}[Bias of the reference histogram]
If \(g^*\) is \(\beta\)-Hölder continuous on \([0,1]\), \(s = \beta \wedge 1\), and \(0 < c_- \le g^* \le c_+ < \infty\), then
\[
h(T_{M_n}g^*,g^*) \le C M_n^{-s}
\]
for some constant \(C < \infty\).
\end{lemma}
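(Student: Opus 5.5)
The plan is to bound the squared Hellinger distance by an $L^1$-type quantity, using the fact that $g^*$ is bounded below. We use the elementary inequality $(\sqrt a-\sqrt b)^2 = (a-b)^2/(\sqrt a+\sqrt b)^2 \le (a-b)^2/b$ for $b>0$, applied pointwise with $a = T_{M_n}g^*(x)$ and $b = g^*(x) \ge c_-$. This gives
\[
h^2(T_{M_n}g^*,g^*) \le \frac{1}{2c_-}\int_0^1 \bigl(T_{M_n}g^*(x) - g^*(x)\bigr)^2\,dx \le \frac{1}{2c_-}\,\sup_{x\in[0,1]}\bigl|T_{M_n}g^*(x)-g^*(x)\bigr|^2 .
\]
So it suffices to show $\|T_{M_n}g^* - g^*\|_\infty \le L M_n^{-s}$, where $L$ is a Hölder-type constant. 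Taking square roots then gives the claim with $C = L/\sqrt{2c_-}$.

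For the sup-norm bound, fix $x \in B_{n,m}$. On that bin, $T_{M_n}g^*(x) = M_n\int_{B_{n,m}} g^*(y)\,dy$ is the average of $g^*$ over an interval of length $M_n^{-1}$. Hence
\[
|T_{M_n}g^*(x) - g^*(x)| \le M_n\int_{B_{n,m}} |g^*(y)-g^*(x)|\,dy \le \sup_{|y-x|\le 1/M_n}|g^*(y)-g^*(x)|.
\]

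The only point needing care, and the main (mild) obstacle, is interpreting the $\beta$-Hölder assumption when $\beta>1$; this is why $s=\beta\wedge 1$ appears. Split into two cases.

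\textbf{Case $\beta \le 1$.} Hölder continuity directly gives $|g^*(y)-g^*(x)| \le L|y-x|^\beta \le L M_n^{-\beta} = L M_n^{-s}$.

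\textbf{Case $\beta > 1$.} Here $g^*$ is differentiable, with $g^{*\prime}$ continuous (indeed Hölder) on the compact set $[0,1]$. So $g^*$ is Lipschitz with constant $L=\|g^{*\prime}\|_\infty<\infty$, and the oscillation over an interval of length $M_n^{-1}$ is at most $L M_n^{-1} = L M_n^{-s}$.

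In both cases $\|T_{M_n}g^*-g^*\|_\infty \le L M_n^{-s}$, which completes the argument. Only the lower bound $c_-$ is used; the upper bound $c_+$ is not needed for this lemma. We note that averaging over bins cannot exploit smoothness beyond order one: piecewise-constant approximation has bias of order $M_n^{-1}$ at best. This is consistent with the cap $s\le 1$ and with the choice $M_n \asymp (n/\log n)^{1/(2s+1)}$ that balances this bias against $\eta_n$.
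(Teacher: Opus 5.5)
Your proof is correct and follows essentially the same route as the paper. It first derives the sup-norm bias bound $\|T_{M_n}g^*-g^*\|_\infty \le L M_n^{-s}$ from Hölder continuity on each bin, then converts the Hellinger distance to an $L^2$ bound using the lower bound $c_-$. The differences are minor: you use $(\sqrt a+\sqrt b)^2\ge b$, so you do not need the paper's observation that $T_{M_n}g^*\ge c_-$, at the cost of the constant $1/(2c_-)$ instead of $1/(8c_-)$; and you spell out the $\beta>1$ case that the paper only asserts.
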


\begin{proof}
Since \(g^*\) is \(\beta\)-Hölder continuous and \(s = \beta \wedge 1\), there exists \(L < \infty\) such that
\[
|g^*(x) - g^*(y)| \le L |x - y|^s
\]
for every \(x,y \in [0,1]\). If \(x \in B_{n,m}\), then
\begin{align*}
|T_{M_n}g^*(x) - g^*(x)| &= \left|M_n \int_{B_{n,m}} g^*(y)dy - g^*(x)\right| \\
&\le M_n \int_{B_{n,m}} |g^*(y) - g^*(x)|dy \\
&\le L M_n^{-s}.
\end{align*}
Thus,
\[
\|T_{M_n}g^* - g^*\|_\infty \le L M_n^{-s}.
\]
Moreover, since \(T_{M_n}g^*\) is a binwise average of \(g^*\), it also satisfies
\[
c_- \le T_{M_n}g^*(x) \le c_+
\]
for every \(x \in [0,1]\). Hence, for \(a,b \in [c_-,c_+]\),
\[
(\sqrt{a} - \sqrt{b})^2 = \frac{(a - b)^2}{(\sqrt{a} + \sqrt{b})^2} \le \frac{(a - b)^2}{4c_-}.
\]
Therefore,
\[
h^2(T_{M_n}g^*,g^*) \le \frac{1}{8c_-}\int_0^1 (T_{M_n}g^*(x) - g^*(x))^2 dx \le \frac{L^2}{8c_-}M_n^{-2s}.
\]
The result follows.
\end{proof}

\begin{lemma}[Tests in the multinomial model]
Let \(p_0 = p^{*,n}\). There exist tests \(\varphi_n\) and constants \(c > 0\), \(A_0 < \infty\) such that
\[
E_{p_0}\varphi_n \le \exp(-c n \eta_n^2)
\]
and
\[
\sup_{p: h(p,p_0) > A_0\eta_n} E_p(1 - \varphi_n) \le \exp(-c n \eta_n^2)
\]
for all sufficiently large \(n\).
\end{lemma}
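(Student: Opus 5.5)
The plan is to construct the tests explicitly from a finite net of multinomial alternatives. Each local test will be a likelihood-ratio (Le Cam--Birgé) test, and the pieces will be combined by a union bound. The key fact is that the observation $\bm v$ is a single draw from $\operatorname{Multinomial}(n,p)$. This is the law of $n$ i.i.d.\ categorical variables on $\{1,\ldots,M_n\}$, namely the (unobserved) bin labels of $Y_1,\ldots,Y_n$, and $\bm v$ is sufficient for them. Any test built on the labels is therefore, after conditioning on $\bm v$, a test based on $\bm v$ with the same error probabilities. So we may work in the i.i.d.\ model on a finite set of size $M_n$, with the Hellinger distance between probability vectors.

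First, for a fixed alternative $p_1$ with $h(p_1,p_0)\ge \epsilon$, consider the test $\phi=\mathbf 1\{\prod_j \sqrt{p_1(Z_j)/p_0(Z_j)}\ge 1\}$. By Markov's inequality and independence, both error probabilities are bounded by
\[
\rho(p_0,p_1)^n=(1-h^2(p_0,p_1))^n\le \exp(-n h^2(p_0,p_1)).
\]
The standard convexity argument (Le Cam; Ghosal--van der Vaart, Lemma on Hellinger tests) upgrades this to a Hellinger ball around $p_1$ of radius $h(p_0,p_1)/2$ or smaller. Concretely, the type II error is at most $\exp(-n\epsilon^2/8)$ uniformly over $\{p: h(p,p_1)\le \epsilon/4\}$, and the type I error is at most $\exp(-n\epsilon^2/8)$. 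I will simply cite this basic lemma for i.i.d.\ data.

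Next, split the alternative $\{h(p,p_0)>A_0\eta_n\}$ into shells $S_j=\{jA_0\eta_n<h(p,p_0)\le (j+1)A_0\eta_n\}$; since Hellinger distance is at most $1$, only finitely many shells are nonempty. Cover each shell by Hellinger balls of radius $jA_0\eta_n/4$. The main quantitative step is the entropy bound for the simplex $\Delta_{M_n}$. Under $p\mapsto\sqrt p$ the simplex maps isometrically (up to the factor $1/\sqrt2$) into the unit sphere of $\mathbb R^{M_n}$. Hence the $\delta$-covering number in $h$ is at most $(C/\delta)^{M_n}$, so
\[
\log N(\delta,\Delta_{M_n},h)\le M_n\log(C/\delta).
\]
With $\delta\ge A_0\eta_n/4$ and $\eta_n\ge n^{-1}$, we get $\log N\le C' M_n\log n = C' n\eta_n^2$. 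Take $\varphi_n$ to be the maximum of all the local tests. The type I error is then at most
\[
\sum_j e^{C'n\eta_n^2}e^{-nj^2A_0^2\eta_n^2/128}\le 2e^{-n\eta_n^2(A_0^2/128-C')},
\]
once $n\eta_n^2\to\infty$, which holds since $M_n\log n\to\infty$. The type II error on $S_j$ is at most $e^{-nj^2A_0^2\eta_n^2/128}\le e^{-nA_0^2\eta_n^2/128}$. Choosing $A_0$ large enough that $A_0^2/128-C'\ge 1$ gives both claims with $c=1/2$ for large $n$.

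The only delicate point is the entropy step. The count of balls must be of order $\exp(O(M_n\log n))$ and not worse, because it has to be absorbed by $\exp(-cA_0^2 n\eta_n^2)$. This works precisely because $\eta_n$ was defined so that $n\eta_n^2=M_n\log n$. The log factor in the rate is exactly what pays for the $\log(1/\delta)$ in the simplex entropy.
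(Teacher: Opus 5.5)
Your proof is correct, but it takes a different route from the paper's.

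\textbf{The paper's route.} The paper's proof is a short citation. It bounds the \emph{bracketing} Hellinger entropy of the $M_n$-simplex by $CM_n\log(C/\epsilon)$ and invokes the Wong--Shen likelihood-ratio inequality. The bracketing-integral condition of that inequality holds because $n\eta_n^2=M_n\log n$.

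\textbf{Your route.} You build the tests explicitly in the Le Cam--Birg\'e / Ghosal--Ghosh--van der Vaart way. You reduce $\operatorname{Multinomial}(n,p)$ to $n$ i.i.d.\ categorical labels by sufficiency. You use minimax tests against convex Hellinger balls and peel the alternative into shells. You count the balls with \emph{covering} numbers, obtained from the isometric embedding $p\mapsto\sqrt p$ into the unit sphere.

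\textbf{Comparison.} Both arguments rest on the same bookkeeping: entropy of order $M_n\log n$ is absorbed by $n\eta_n^2=M_n\log n$. For the finite simplex, bracketing and covering entropies are of the same order, so neither route loses anything in the rate. Your version gains in three ways:
\begin{itemize}
\item It needs only covering numbers, which you get in one line.
\item It produces the local-test and shell structure on which the Ghosal--van der Vaart contraction theorem is built.
\item It shows that the exponent scales like $A_0^2$, so $c$ can be made as large as needed to dominate the prior-mass constant from the thickness lemma. The contraction step requires this, but the paper's statement (``some $c>0$'') leaves it implicit.
\end{itemize}
The paper's route is shorter, and Wong--Shen yields a stronger object: a uniform bound on the likelihood ratio over the whole alternative. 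However, the step from that inequality to tests with the stated type I and type II errors is left to the reader.
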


\begin{proof}
This is the standard testing bound for multinomial experiments under Hellinger loss. The class of multinomial probability vectors on \(M_n\) categories has bracketing entropy bounded by
\[
H_{[]}(\epsilon,\mathcal P_{M_n},h) \le C M_n \log(C / \epsilon)
\]
for a universal constant \(C < \infty\). Since
\[
n \eta_n^2 = M_n \log n,
\]
and
\[
M_n \log(1 / \eta_n) \le C M_n \log n,
\]
the bracketing-integral condition in the likelihood-ratio testing theorem of Wong and Shen is satisfied. This gives tests with exponentially small type I and type II errors at Hellinger separation of order \(\eta_n\).
\end{proof}

\begin{lemma}[MFM prior thickness for the binned probabilities]
Let $g$ be the random probability density function induced by the mixture of finite mixture model in \eqref{eq:singleHistModel}. 
Under the prior assumptions in Theorem \ref{thm:mfm-binned-hellinger}, there exist constants \(C < \infty\) and \(c > 0\) such that
\[
\Pi\left(K(p^{*,n},p^{(n)}(g)) \le c\eta_n^2,\ V(p^{*,n},p^{(n)}(g)) \le c\eta_n^2\right) \ge \exp(-C n \eta_n^2),
\]
where
\[
K(p,q) = \sum_{m = 1}^{M_n} p_m \log \frac{p_m}{q_m}
\]
and
\[
V(p,q) = \sum_{m = 1}^{M_n} p_m \left(\log \frac{p_m}{q_m}\right)^2.
\]
\end{lemma}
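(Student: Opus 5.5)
We will prove the prior-mass bound by constructing an explicit finite normal mixture that approximates $g^*$ at the bin level. We then show that a whole neighbourhood of mixing parameters around it has prior mass at least $\exp(-Cn\eta_n^2)=\exp(-CM_n\log n)$.

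\textbf{Reduction to a sup-norm ratio bound.} Since $g^*\ge c_-$, every cell satisfies $p^{*,n}_m\ge c_-/M_n$. Suppose a candidate $g$ has bin masses with $\max_m |p^{(n)}_m(g)/p^{*,n}_m-1|\le \delta\le 1/2$. The standard inequalities $K(p,q)\lesssim \sum_m (p_m-q_m)^2/q_m$ and $V(p,q)\lesssim \max_m|\log(p_m/q_m)|^2$ then give $K,V\lesssim \delta^2$. It therefore suffices to show
\[
\Pi\Bigl(\max_m \bigl|p_m^{(n)}(g)/p_m^{*,n}-1\bigr|\le c'\eta_n\Bigr)\ge \exp(-C M_n\log n).
\]
The cells $m=1$ and $m=M_n$ in the paper's original convention, with $\tau_0=-\infty$ and $\tau_M=\infty$, carry zero true mass. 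They will be handled through the normalization below, since the model's mass outside $[0,1]$ must be small. We work with $g$ restricted to $[0,1]$ and absorb the leaked mass into the bound.

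\textbf{Approximating mixture.} Set $K_0=M_n$, put means $\mu_m^0$ at the bin midpoints, use a common scale $\sigma_0\asymp 1/M_n$, and take weights $w_m^0=p^{*,n}_m$. The bin-mass vector of $\sum_m w^0_m\phi_{\sigma_0}(\cdot-\mu^0_m)$ equals $Qp^{*,n}$, where $Q$ is a Toeplitz leakage matrix. To make the relative error $O(\eta_n)=O(M_n^{-s})$, we will rely on the Hölder smoothness of $g^*$: neighbouring cell masses differ by relative amounts $O(M_n^{-s})$, so each cell receives back what it leaks up to relative error $O(M_n^{-s})$. This works provided $\sigma_0=\kappa/M_n$ with $\kappa$ small enough that Gaussian tails beyond a few bins are negligible, since the tail sum of $e^{-j^2/(2\kappa^2)}$ is summable. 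Near the boundary of $[0,1]$ the leaked mass has nothing to come back from. We fix this either by reflecting, that is, adding extra components just outside $[0,1]$ (allowed because $g_\mu$ is positive on an open interval containing $[0,1]$), or by taking $\kappa$ so small, possibly $\asymp 1/\sqrt{\log n}$, that the boundary leakage is $\lesssim n^{-1}$. The second option changes the $\sigma$-prior bound only by polylogarithmic factors, which the condition $\Pi_\sigma([t,2t])\ge c_\sigma t^{r_\sigma}$ tolerates. This construction is the main obstacle: getting a uniform relative error $\eta_n$ in every cell, including the edge cells, with only $O(M_n)$ components.

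\textbf{Prior mass of the neighbourhood.} Consider all parameters with $K=K_0$, each $\mu_k$ within $\epsilon\sigma_0 \eta_n$ of $\mu^0_k$, each $\sigma_k\in[\sigma_0,2\sigma_0]$ (or a finer subinterval), and $\sum_k|w_k-w^0_k|\le \eta_n/M_n$ together with $w_k\ge w^0_k/2$. Perturbation bounds for Gaussian CDFs show that the bin masses change by at most relative $O(\eta_n)$ on this set. We then multiply the individual prior masses.
\begin{itemize}
\item For $K$: $\Pi(K=M_n)\ge e^{-C_KM_n\log M_n}$.
\item For the ordered means: the order-statistic density is $K!\prod g_\mu$, so the mass is at least $(c\,\epsilon\sigma_0\eta_n)^{M_n}\ge e^{-CM_n\log n}$.
\item For the scales: $\prod_k\Pi_\sigma(\cdot)\ge (c_\sigma\sigma_0^{r_\sigma})^{M_n}$.
\item For the weights: the standard Dirichlet small-ball lemma (Ghosal–van der Vaart) gives mass $\ge \exp(-CM_n\log(M_n/\eta_n))$.
\end{itemize}
Each factor is at least $e^{-CM_n\log n}=e^{-Cn\eta_n^2}$, which concludes the lemma.
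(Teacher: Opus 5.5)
\textbf{Verdict: there is a genuine gap.} Your main construction fails at two points. Only your fallback option (b) works, and that option is exactly the paper's proof.

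\textbf{First failure: unequal scales.} With a common scale $\sigma_0=\kappa/M_n$ and $\kappa$ fixed, the Toeplitz cancellation needs all components to share the same scale up to relative error $O(\eta_n)$. The neighbourhood you actually charge prior mass to is $\sigma_k\in[\sigma_0,2\sigma_0]$, chosen independently for each $k$; this is what produces $(c_\sigma\sigma_0^{r_\sigma})^{M_n}$. On that set, neighbouring components leak different constant fractions. If $\sigma_m=2\sigma_0$ while its neighbours have scale $\sigma_0$, cell $m$ loses a fraction $2\Phi(-1/(4\kappa))$ of its mass. It receives back only about $2\Phi(-1/(2\kappa))$. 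That is an $O(1)$ relative error, so your perturbation claim is false on this set. A scale neighbourhood of relative width $\eta_n$ is not covered by $\Pi_\sigma([t,2t])\ge c_\sigma t^{r_\sigma}$ as stated. You could patch this with a pigeonhole step: pick a subinterval of $[t,2t]$ of relative length $\eta_n$ with mass $\gtrsim\eta_n t^{r_\sigma}$, and let $\sigma_0$ depend on $n$.

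\textbf{Second failure: the boundary.} This one cannot be patched within the construction. $p^{(n)}(g)$ is only a sub-probability vector on $[0,1]$, and Jensen gives
\[
K\bigl(p^{*,n},p^{(n)}(g)\bigr)\ \ge\ 1-\sum_{m=1}^{M_n}p_m^{(n)}(g),
\]
which is the mass of $g$ outside $[0,1]$. Components of scale $\asymp 1/M_n$ next to $0$ or $1$ put mass $\asymp 1/M_n$ outside $[0,1]$; reflected components centred outside $[0,1]$ put even more there. Hence $K\gtrsim M_n^{-1}\gg\eta_n^2\asymp M_n^{-2s}$ whenever $s>1/2$. So the leaked mass cannot be ``absorbed into the bound'', and the reflection option fails. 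For the same reason, your reduction ``$\max_m|p_m^{(n)}(g)/p_m^{*,n}-1|\le\delta$ implies $K\lesssim\delta^2$'' holds only when the outside mass is $O(\delta^2)$.

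\textbf{What works: your option (b), which is the paper's argument.} Take every $\sigma_k\in[a/(M_n\sqrt{\log n}),\,2a/(M_n\sqrt{\log n})]$. This is a full dyadic interval, so the scale assumption applies directly. Take means within $1/(8M_n)$ of the bin midpoints. Then:
\begin{itemize}
\item every component's mass outside its own bin is at most $\rho_n\le C_1n^{-C_2}$;
\item hence $|p_m^{(n)}(g)-w_m|\le\rho_n$ for every $m$, and the mass outside $[0,1]$ is at most $\rho_n$;
\item the whole problem reduces to the Dirichlet weights lying in a relative $\eta_n$-neighbourhood of $p^{*,n}$.
\end{itemize}
This choice makes several parts of your proposal unnecessary:
\begin{itemize}
\item the Toeplitz construction;
\item the H\"older smoothness of $g^*$, which the paper uses only in the bias lemma;
\item fine localisation of $\mu_k$ and $\sigma_k$.
\end{itemize}
The ``main obstacle'' you identify disappears. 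Commit to this choice throughout; then your factor-by-factor prior-mass bounds for $K$, the ordered means, the scales and the weights go through. Each factor is at least $e^{-CM_n\log n}=e^{-Cn\eta_n^2}$.
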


\begin{proof}
Since \(g^*\) is bounded away from zero and infinity, there exist constants \(0 < a_- < a_+ < \infty\) such that
\[
\frac{a_-}{M_n} \le p_m^{*,n} \le \frac{a_+}{M_n}
\]
for all \(m = 1,\ldots,M_n\). We construct a subset of the MFM parameter space on which the induced binned probabilities are close to \(p^{*,n}\).

Set \(K = M_n\), and let
\[
x_m = \frac{m - 1 / 2}{M_n},\quad m = 1,\ldots,M_n
\]
be the bin midpoints. Consider the event \(\mathcal E_n\) on which
\[
|w_m - p_m^{*,n}| \le b\eta_n p_m^{*,n}
\]
for every \(m = 1,\ldots,M_n\),
\[
|\mu_m - x_m| \le \frac{1}{8M_n}
\]
for every \(m = 1,\ldots,M_n\), and
\[
\sigma_m \in \left[\frac{a}{M_n\sqrt{\log n}},\frac{2a}{M_n\sqrt{\log n}}\right]
\]
for every \(m = 1,\ldots,M_n\), where \(a > 0\) is sufficiently small and \(b > 0\) is sufficiently small.

On \(\mathcal E_n\), the \(m\)-th Gaussian component is centered well inside the \(m\)-th bin and has scale of order \(1 / (M_n\sqrt{\log n})\). Therefore its probability of falling outside the \(m\)-th bin is bounded by
\[
\rho_n \le C_1 n^{-C_2}
\]
where \(C_2\) can be made arbitrarily large by choosing \(a\) sufficiently small. In particular, we choose \(a\) so that, for all large \(n\),
\[
\rho_n \le b\eta_n \frac{a_-}{M_n}.
\]
It follows that, on \(\mathcal E_n\),
\[
|p_m^{(n)}(g) - w_m| \le \rho_n
\]
for every \(m = 1,\ldots,M_n\). Therefore,
\[
|p_m^{(n)}(g) - p_m^{*,n}| \le |w_m - p_m^{*,n}| + \rho_n \le 2b\eta_n p_m^{*,n}
\]
for every \(m = 1,\ldots,M_n\). Taking \(b\) sufficiently small, this implies
\[
\left|\frac{p_m^{(n)}(f)}{p_m^{*,n}} - 1\right| \le \frac{1}{2}.
\]
Using the elementary bounds \(|\log(1 + u)| \le C |u|\) and \((\log(1 + u))^2 \le C u^2\) for \(|u| \le 1 / 2\), we obtain
\[
K(p^{*,n},p^{(n)}(g)) \le C_3 \eta_n^2
\]
and
\[
V(p^{*,n},p^{(n)}(g)) \le C_3 \eta_n^2
\]
on \(\mathcal E_n\).

It remains to lower bound \(\Pi(\mathcal E_n)\). By assumption,
\[
\Pi(K = M_n) \ge \exp(-C_K M_n \log M_n).
\]
Conditional on \(K = M_n\), the Dirichlet prior with fixed parameter \(\alpha > 0\) assigns at least
\[
\exp(-C_4 M_n \log n)
\]
mass to the relative neighborhood
\[
|w_m - p_m^{*,n}| \le b\eta_n p_m^{*,n},\quad m = 1,\ldots,M_n.
\]
This follows from Stirling's formula and from the fact that all \(p_m^{*,n}\) are of order \(1 / M_n\).

Next, since the ordered means are the order statistics of i.i.d. draws from a density bounded away from zero on an interval containing \([0,1]\), the prior mass of the event
\[
|\mu_m - x_m| \le \frac{1}{8M_n},\quad m = 1,\ldots,M_n
\]
is bounded below by
\[
\exp(-C_5 M_n \log M_n).
\]
Finally, by the scale-prior assumption,
\begin{align*}
\Pi\left(\sigma_m \in \left[\frac{a}{M_n\sqrt{\log n}},\frac{2a}{M_n\sqrt{\log n}}\right],\ m = 1,\ldots,M_n\right) &\ge \left(c_\sigma \left(\frac{a}{M_n\sqrt{\log n}}\right)^{r_\sigma}\right)^{M_n} \\
&\ge \exp(-C_6 M_n \log n).
\end{align*}
Combining the four lower bounds, and using \(M_n \log M_n \le C M_n \log n\), gives
\[
\Pi(\mathcal E_n) \ge \exp(-C_7 M_n \log n).
\]
Since \(n \eta_n^2 = M_n \log n\), this is
\[
\Pi(\mathcal E_n) \ge \exp(-C_7 n \eta_n^2).
\]
The result follows because \(\mathcal E_n\) is contained in the binned Kullback--Leibler neighborhood displayed at the beginning of the lemma.
\end{proof}

We now prove the theorem. By the testing lemma and the prior-thickness lemma, the posterior contraction theorem for non-i.i.d. triangular-array experiments \citep{GvV2007noniid} yields
\[
\Pi\left(h(p^{(n)}(g),p^{*,n}) > A_1\eta_n \mid \bm v\right) \to 0
\]
in \(P_{g^*}\)-probability, for a sufficiently large constant \(A_1 < \infty\). By the Hellinger identity,
\[
\Pi\left(h(T_{M_n}g,T_{M_n}g^*) > A_1\eta_n \mid \bm v\right) \to 0
\]
in \(P_{g^*}\)-probability.

Finally, by the triangle inequality,
\[
h(T_{M_n}g,g^*) \le h(T_{M_n}g,T_{M_n}g^*) + h(T_{M_n}g^*,g^*).
\]
By the bias lemma,
\[
h(T_{M_n}g^*,g^*) \le C M_n^{-s}.
\]
Therefore, for a sufficiently large constant \(A_2 < \infty\),
\[
\Pi\left(h(T_{M_n}g,g^*) > A_2(\eta_n + M_n^{-s}) \mid \bm v\right) \to 0
\]
in \(P_{g^*}\)-probability. Since \(g_b = T_{M_n}f\), and since
\[
\eta_n + M_n^{-s} \asymp \left(\frac{\log n}{n}\right)^{s / (2s + 1)},
\]
the claim follows.

\end{document}